\newtheorem{theorem}{Theorem}
\newtheorem{observation}[theorem]{Observation}
\newcommand\norm[1]{\left\lVert#1\right\rVert}
\newcommand{\tr}{{\mathrm{tr}}}
\newcommand{\va}[1]{\ensuremath{(\Delta#1)^2}}
\newcommand{\ex}[1]{\ensuremath{\langle{#1}\rangle}}
\newcommand{\eins}{\mathbbm{1}}
\newcommand{\swap}{\mathbb{S}}
\renewcommand{\vr}{\ensuremath{\varrho}}
\renewcommand{\vec}[1]{\ensuremath{\boldsymbol{#1}}}
\begin{document}
\title{Collective randomized measurements in quantum information processing}

\author{Satoya Imai}
\affiliation{Naturwissenschaftlich-Technische Fakult\"at, Universit\"at Siegen, Walter-Flex-Stra{\ss}e~3, 57068 Siegen, Germany}
\affiliation{QSTAR, INO-CNR, and LENS, Largo Enrico Fermi, 2, 50125 Firenze, Italy}

\author{G\'eza T\'oth}
\affiliation{Department of Theoretical Physics, University of the Basque Country UPV/EHU, P.O. Box 644, E-48080 Bilbao, Spain}
\affiliation{EHU Quantum Center, University of the Basque Country UPV/EHU, Barrio Sarriena s/n, ES-48940 Leioa, Biscay, Spain}
\affiliation{Donostia International Physics Center (DIPC), P.O. Box 1072, E-20080 San Sebasti\'an, Spain}
\affiliation{IKERBASQUE, Basque Foundation for Science, E-48009 Bilbao, Spain}
\affiliation{HUN-REN Wigner Research Centre for Physics, P.O. Box 49, H-1525 Budapest, Hungary}

\author{Otfried G\"uhne}
\affiliation{Naturwissenschaftlich-Technische Fakult\"at, Universit\"at Siegen, Walter-Flex-Stra{\ss}e~3, 57068 Siegen, Germany}

\date{\today}
\begin{abstract}
The concept of randomized measurements on individual particles has
proven to be useful for analyzing quantum systems and is central for methods
like shadow tomography of quantum states. We introduce {\it collective} randomized
measurements as a tool in quantum information processing. Our idea is to perform
measurements of collective angular momentum on a quantum system and actively rotate
the directions using simultaneous multilateral unitaries. Based on the moments of
the resulting probability distribution, we propose systematic approaches to characterize
quantum entanglement in a collective-reference-frame-independent manner.
First, we show that existing spin-squeezing inequalities can be accessible in this scenario.
Next, we present an entanglement criterion based on three-body correlations, going beyond
spin-squeezing inequalities with two-body correlations. Finally, we apply our method to
characterize entanglement between spatially-separated two ensembles.
\end{abstract}

\maketitle

{\it Introduction.---}Rapid advances in quantum technology have made it possible to manipulate and control increasingly complex quantum systems.
However, as the number of particles increases, the dimension of the Hilbert space grows exponentially, making it difficult to analyze quantum states fully.
One way to address this issue is to rotate measurement directions with random unitaries and consider the moments of the resulting probability distribution.
This method can provide essential quantum information about the system and give several advantages in characterizing quantum systems~\cite{elben2023randomized,cieslinski2023analysing}.

First, it allows us to obtain knowledge of the quantum state, reducing the experimental effort compared with the standard way of quantum state tomography.
Second, it is useful when some prior information about the state is not available, such as when an experiment is intended to create a particular quantum state.
Third, and most importantly, it does not need careful calibration and alignment of measurement directions or the sharing of a common frame of reference between the parties.

Several proposals have been put forward in the field of randomized measurements to
detect bipartite~\cite{brydges2019probing,elben2018renyi,ketterer2019characterizing,imai2021bound,liu2022detecting,wyderka2023probing,liu2023characterizing,wyderka2023complete} and
multipartite entanglement~\cite{tran2015quantum,tran2016correlations,ketterer2020entanglement,knips2020multipartite,ketterer2022statistically,liu2022characterizingcorr}.
Another research line of randomized measurements has
estimated several useful functions of quantum states such as
state's purity~\cite{van2012measuring},
Rényi entropies~\cite{brydges2019probing,elben2018renyi,elben2019statistical},
state's fidelities~\cite{elben2020cross},
scrambling~\cite{vermersch2019probing},
many-body topological invariants~\cite{elben2020many,cian2021many},
the von Neumann entropy~\cite{vermersch2023enhanced},
quantum Fisher information~\cite{yu2021experimental,rath2021quantum,vitale2023estimation},
and
the moments of the partially transposed quantum state~\cite{elben2020mixed,zhou2020single,yu2021optimal,neven2021symmetry,vitale2022symmetry,carrasco2022entanglement}.
Also, in the framework of shadow tomography, the techniques of randomized measurements are used to predict future measurements via estimators in data collections~\cite{aaronson2018shadow,huang2020predicting,zhang2021experimental,struchalin2021experimental,nguyen2022optimizing,rath2023entanglement}.

\begin{figure}[t]
    \centering
    \includegraphics[width=0.4\columnwidth]{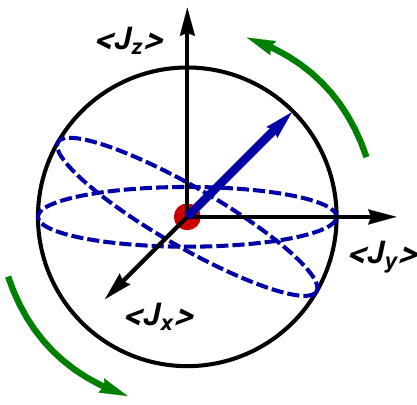}
    \caption{
    Sketch of the collective Bloch sphere with the coordinates
    $(\langle J_x\rangle, \langle  J_y\rangle, \langle J_z\rangle)$.
    Many-body spin singlet states are represented by a dot at the center (red), which does not change under any multilateral unitary transformations $U^{\otimes N}$ (green arrows).
    Spin measurement in the $z$-direction is rotated randomly (blue arrow).
    This paper proposes systematic methods to characterize spin-squeezing entanglement in an ensemble of particles
    by rotating a collective measurement direction randomly over this sphere.
    }
    \label{Fig1:sketch}
\end{figure}
Although many tools in randomized measurements were already presented,
still the current findings are not fully comprehensive in several respects.
One limitation of the results presented so far is the assumption that local
subsystems can be controlled individually.
However, this may not be available in an ensemble of quantum particles such as 
cold atoms~\cite{hald1999spin}, or
trapped ions~\cite{meyer2001experimental}, or
Bose-Einstein condensates with spin squeezing~\cite{kitagawa1993squeezed,wineland1994squeezed,sorensen2001many}.
Such quantum systems can be characterized by measuring global quantities such as collective angular momenta~\cite{toth2007optimal,toth2009spin,ma2011quantum,pezze2018quantum,braun2018quantum}.

Another practical challenge is that powerful entanglement detection 
requires many operational resources.
For instance, Refs.~\cite{imai2021bound,liu2022detecting} suggested that at
least fourth-order moments of randomized measurements are needed to characterize
a very weak form of entanglement, known as bound entanglement~\cite{horodecki1998mixed}.
In fact, their practical implementation may require 
a significant amount of randomized measurement due to the limited
availability of unitary 
designs~\cite{cieslinski2023analysing,dankert2005efficient,gross2007evenly, 
ketterer2019characterizing}.

In this paper, we generalize the concept of randomized measurements
on individual particles to the notion of {\it collective} randomized measurements.
The main idea is to perform collective random
rotations on a multiparticle quantum system before
a fixed measurement and consider the moments of the 
resulting distribution of results with respect to the
randomly chosen rotations.
We will apply this idea to different scenarios and present several entanglement
criteria in a collective-reference-frame-independent (CRFI) manner.
Note that a similar idea of collective randomization has recently been used in the context of classical shadow tomography~\cite{van2022hardware} to
classify trivial and topologically-ordered phases in many-body quantum systems.

We first show that spin-squeezing entanglement in permutationally symmetric $N$-particle systems can be characterized completely.
Second, even in non-symmetric cases, we demonstrate that the second-order moment can detect multiparticle bound entanglement.
Third, we further introduce a criterion to certify multiparticle bound entanglement with antisymmetric correlations via third-order moments.
Finally, we generalize the method to verify entanglement between spatially-separated two quantum ensembles.

\vspace{1em}
{\it Collective randomized measurements.---}Consider a quantum ensemble that consists of $N$ spin-$\frac{1}{2}$ particles in a state $\vr \in \mathcal{H}_2^{\otimes N}$.
Suppose that each particle in this ensemble cannot be controlled individually, and one can instead measure the collective angular momentum
\begin{equation}
    J_l = \frac{1}{2}\sum_{i=1}^N \sigma_l^{(i)},
\end{equation}
with Pauli spin matrices $\sigma_l^{(i)}$ for $l=x,y,z$ acting on $i$-th subsystem.

Let us perform measurements with $J_z$ and rotate the collective direction in an arbitrary manner.
We introduce an expectation value and its variance according to a random unitary:
\begin{subequations}
\begin{align}
    \ex{J_z}_U \label{eq:Jz}
    &= \tr\left[
    \vr U^{\otimes N} J_z (U^\dagger)^{\otimes N}
    \right],\\
    \va{J_z}_U \label{eq:Jz^2}
    &= \ex{J_z^2}_U - \ex{J_z}_U^2.
\end{align}
\end{subequations}
These depend on the choice of collective simultaneous multilateral unitary operations $U^{\otimes N}$.
Now we define a linear combination as
\begin{equation}
    f_U (\vr) = \alpha \va{J_z}_U + \beta \ex{J_z}_U^2 + \gamma,
    \label{eq:funcJ}
\end{equation}
where
$\alpha, \beta, \gamma$
are real constant parameters.
The function $f_U (\vr)$ can be determined experimentally by observing 
$\ex{J_z}_U$ and $\va{J_z}_U$ as each parameter can be adjusted 
in the post-processing.

The key idea to detect entanglement in $\vr$ is to take a sample over collective local unitaries and consider the $r$-th moments of the resulting distribution
\begin{equation}
    \mathcal{J}^{(r)} (\vr)
    = \int dU \,
    [f_U (\vr)]^r, \label{eq:momentJr}
\end{equation}
where the integral is taken according to the Haar measure.
This collective unitary transformation can be written as
$U^{\otimes N} = e^{i \textit{\textbf{u}} \cdot \textit{\textbf{J}}}$,
where
$\textit{\textbf{u}} = (u_x, u_y, u_z)$ is a three-dimensional unit vector
and 
$\textit{\textbf{J}} = (J_x, J_y, J_z)$
is a vector of collective angular momenta.
The randomization of Haar collective unitaries corresponds to the uniform randomization over the three-dimensional sphere in the coordinates $(\ex{J_x}, \ex{J_y}, \ex{J_z})$.
This sphere is known as the \textit{collective Bloch sphere}~\cite{ma2011quantum, pezze2018quantum, friis2019entanglement} in an analogy of the standard Bloch sphere in a single-qubit system, illustrated in Fig.~\ref{Fig1:sketch}.

It is essential that, by definition, the moments are \textit{invariant} under any collective local unitary transformation
\begin{equation}
     \mathcal{J}^{(r)}
    \left[V^{\otimes N}  \vr (V^\dagger)^{\otimes N}
    \right]
    = \mathcal{J}^{(r)}(\vr), 
\end{equation}
for a collective local unitary $V^{\otimes N}$ for $2 \times 2$ unitaries $V$.
In the following, we will discuss CRFI entanglement detection based on the moments $\mathcal{J}^{(r)}$.

\vspace{1em}
{\it Permutationally symmetric states.---}
To proceed, let us recall that an $N$-qubit state $\vr$ is called \textit{permutationally symmetric (bosonic)} if it satisfies $P_{ab} \vr = \vr P_{ab} = \vr$, for all $a, b \in \{1,2,\ldots, N\}$ with $a\ne b$.
Here $P_{ab}$ is an orthogonal projector onto the so-called symmetric subspace that remains invariant under all the permutations.
Note that $P_{ab}$ can be written as
$P_{ab} = (\eins + \swap_{ab})/2$
with the SWAP (flip) operator
$\swap_{ab} = \sum_{i,j} \ket{ij}\!\bra{ji}$
that can exchange qubits $a,b$:
$\swap_{ab} \ket{\psi_a}\otimes \ket{\psi_b}
= \ket{\psi_b}\otimes \ket{\psi_a}$.
We stress that the notion of permutational symmetry is stronger 
than permutational invariance, defined by $P_{ab} \vr P_{ab} = \vr$~\cite{toth2009entanglement}.

There are many studies on the entanglement of permutationally symmetric states~\cite{
eckert2002quantum,
stockton2003characterizing,
miszczak2008sub,
ichikawa2008exchange, 
toth2009entanglement,
wei2010exchange,
hansenne2022symmetries}.
In general, a state $\vr$ is said to contain multipartite entanglement 
if it cannot be written as the fully separable state
\begin{align}
    \vr_{\text{fs}}
    =
    \sum_k p_k \ket{a^{(1)}_k, a^{(2)}_k \cdots a^{(N)}_k}
    \!
    \bra{a^{(1)}_k, a^{(2)}_k \cdots a^{(N)}_k},
    \label{eq-entanglement}
\end{align}
where the $p_k$ form a probability distribution.
Importantly, for any $N$-particle permutationally 
symmetric state the pure states in a decomposition like Eq.~(\ref{eq-entanglement})
need to be symmetric, too; so a symmetric state is 
either fully separable or genuinely multipartite entangled (GME)~\cite{eckert2002quantum, wei2010exchange},
where GME states cannot be written in any separable form for all bipartitions.
One sufficient way to prove GME for a symmetric state is thus to detect entanglement
in a two-particle reduced state $\vr_{ab} = \tr_{(a,b)^c}(\vr)$ for only one pair $(a,b)$ with
the complement $(a,b)^c$. This can be achieved by accessing only the two-body correlations as minimal information.

The notion of spin squeezing originally relies on certain 
spin-squeezing parameters~\cite{wineland1992spin}, but in several previous works~\cite{wang2003spin, korbicz2005spin, korbicz2006generalized, toth2007optimal, toth2009spin}, a state $\vr$ is called \textit{spin-squeezed} if
its entanglement can be detected from the values of $\ex{J_l}$ and $\ex{J_l^2}$ only for any three orthogonal directions, e.g., $l = x,y,z$.
For a symmetric state, its spin-squeezing entanglement has been completely characterized in a \textit{necessary and sufficient} manner
by proving the entanglement in the two-particle reduced states~\cite{wang2003spin, korbicz2005spin, korbicz2006generalized, toth2007optimal, toth2009spin}.
On the other hand, such a characterization
requires optimizations over collective measurement directions for a given quantum state.

In the following, we will show that the collective randomized measurement scheme can reach the same conclusion without such an optimization.
We can formulate the first main result of this paper:
\begin{observation}\label{ob:neccesuffi}
    For an $N$-qubit permutationally symmetric state $\vr$,
    the first, second, and third moments $\mathcal{J}^{(r)}(\vr)$
    for $r=1,2,3$ completely characterize spin-squeezing entanglement.
    That is, a constructive procedure for achieving the necessary
    and sufficient condition is obtained by the moments with the parameters
    $\alpha =2/N_2,\,
    \beta =-2(N-2)/(NN_2),\,
    \gamma = -1/[2(N-1)]$
    and $N_2 = N(N-1)$.
\end{observation}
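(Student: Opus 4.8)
\emph{Proof sketch.} The plan is to turn the Haar average over collective unitaries into a spherical average, to read off the three eigenvalues of a fixed $3\times 3$ matrix from the three moments, and then to match this spectral data to the known separability criterion for symmetric states. First I write $U^{\otimes N}=e^{i\vec{u}\cdot\vec{J}}$ and use that the two-to-one cover $SU(2)\to SO(3)$ carries the Haar measure to the uniform measure on the collective Bloch sphere, so that after rotation one measures $\vec{n}\cdot\vec{J}$ for a uniformly random unit vector $\vec{n}\in S^2$. Then $\ex{J_z}_U=\vec{n}\cdot\vec{b}$ with $\vec{b}=(\ex{J_x},\ex{J_y},\ex{J_z})$, and $\va{J_z}_U=\vec{n}^{T}\Gamma\vec{n}$ with $\Gamma_{kl}=\tfrac12\ex{J_kJ_l+J_lJ_k}-\ex{J_k}\ex{J_l}$ the collective covariance matrix, hence $f_U(\vr)=\vec{n}^{T}M\vec{n}+\gamma$ with $M=\alpha\Gamma+\beta\,\vec{b}\vec{b}^{T}$ a fixed real symmetric $3\times 3$ matrix and $\mathcal{J}^{(r)}(\vr)=\int_{S^2}(\vec{n}^{T}M\vec{n}+\gamma)^r\,d\vec{n}$ for the normalized uniform measure.

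Next I evaluate these integrals with the standard spherical averages of even monomials $n_{i_1}\cdots n_{i_{2m}}$. One finds that $\mathcal{J}^{(1)},\mathcal{J}^{(2)},\mathcal{J}^{(3)}$ are affine functions of $\tr M$; of $(\tr M)^2$ and $\tr M^2$; and of $(\tr M)^3$, $\tr M\,\tr M^2$ and $\tr M^3$, respectively, with triangular and invertible dependence. Thus the three moments fix the power sums $\tr M^j$ for $j=1,2,3$, hence, by Newton's identities, the characteristic polynomial of $M$, hence its full spectrum $\mu_1\ge\mu_2\ge\mu_3$ --- and nothing more, since a spherical average depends only on the rotation orbit of $M$. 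Now for permutationally symmetric $\vr$ the collective first and second moments are carried by the two-qubit reduced state $\vr_{12}$ via its single-particle Bloch vector $\vec{a}$ and symmetric correlation matrix $T$: $\ex{J_l}=\tfrac N2 a_l$, $\tfrac12\ex{J_kJ_l+J_lJ_k}=\tfrac{N_2}{4}T_{kl}+\tfrac N4\delta_{kl}$, with $\tr T=1$ forced by symmetry (equivalently $\ex{\vec{J}^2}=N(N+2)/4$). Substituting this and the stated $\alpha,\beta,\gamma$, the matrix collapses to $M=\tfrac12 T+\tfrac1{2(N-1)}\eins-\vec{a}\vec{a}^{T}$, i.e.\ $f_U(\vr)=\tfrac12\ex{\sigma_{\vec{n}}\otimes\sigma_{\vec{n}}}_{\vr_{12}}-\ex{\sigma_{\vec{n}}}_{\vr_1}^{\,2}$, so that $\tr M$ fixes $|\vec{a}|^2$ and $M-\tfrac1{2(N-1)}\eins=\tfrac12(C-\vec{a}\vec{a}^{T})$ with $C=T-\vec{a}\vec{a}^{T}$ the connected two-body correlation matrix; checking that exactly these parameters produce this clean form is a short computation.

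To finish, I invoke the completed spin-squeezing characterization for symmetric states: $\vr$ carries no spin-squeezing entanglement iff $\vr_{12}$ is separable iff $C=T-\vec{a}\vec{a}^{T}\succeq 0$ (equivalently, the optimal generalized spin-squeezing inequalities hold in all directions). Because a symmetric two-qubit state is fully determined by $(\vec{a},T)$ with $\tr T=1$, I combine this with the physicality constraint $\vr_{12}\ge 0$ to turn ``$C\succeq 0$'' into a statement about $\mathrm{spec}(M)$ alone: writing $\kappa_i=2\mu_i-1/(N-1)$ for the (already known) eigenvalues of $C-\vec{a}\vec{a}^{T}$ and $|\vec{a}|^2=\tfrac12+\tfrac{3}{2(N-1)}-(\mu_1+\mu_2+\mu_3)$, one is led to the requirement $\kappa_2\ge 0$ together with $\kappa_3\ge -|\vec{a}|^2$, i.e.\ $\mu_2\ge\tfrac1{2(N-1)}$ and $\mu_1+\mu_2-\mu_3\le\tfrac{N}{2(N-1)}$. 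Both sides are computable from $\mathcal{J}^{(1,2,3)}$, which gives the necessary-and-sufficient criterion with no optimization over collective measurement directions, proving the observation.

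I expect the last step to be the main obstacle. A priori, separability of $\vr_{12}$ depends on the relative orientation of $\vec{a}$ and the eigenframe of $T$ --- six rotational invariants --- while the moments supply only the three eigenvalues of $M$. The crux is that $\tr T=1$ and $\vr_{12}\ge 0$ collapse the decisive information onto $\mathrm{spec}(M)$: if $C-\vec{a}\vec{a}^{T}$ has two or more negative eigenvalues, no rank-one correction $+\vec{a}\vec{a}^{T}$ can make $C$ positive; if it has none, $C$ is positive automatically; and in the borderline case of exactly one negative eigenvalue, physicality forces $\vec{a}$ to lie along the corresponding eigenvector, so that $C\succeq 0$ holds precisely when $\kappa_3+|\vec{a}|^2\ge 0$. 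Carrying out this dichotomy carefully --- above all the forced alignment in the borderline case --- together with the bookkeeping that pins down the threshold constants, is what the proof must do.
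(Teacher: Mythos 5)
Your overall architecture matches the paper's: reduce spin-squeezing entanglement of a symmetric state to positivity of the two-body covariance matrix $C = T - \vec{a}\vec{a}^\top$, and recover the needed spectral invariants from the first three moments; your spherical-average evaluation of the Haar integrals is equivalent to the Weingarten/design formulas the paper uses, and your triangular inversion to the power sums matches the paper's explicit $\mathcal{C}^{(1,2,3)}$ expressions. The divergence---and the gap---is in \emph{which} matrix the moments encode. The paper arranges the parameters so that $f_U(\vr)$ equals the directional covariance $\text{Cov}_U = \vec{n}^\top C\,\vec{n}$ itself; then $\mathcal{J}^{(1,2,3)}$ yield $\tr C$, $\tr C^2$, $\tr C^3$, hence $\mathrm{spec}(C)$, and the decision $C\geq 0$ is immediate. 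You instead land on $f_U = \vec{n}^\top\bigl[\tfrac12(C-\vec{a}\vec{a}^\top)\bigr]\vec{n}$. The discrepancy traces to the coefficient of the two-body term in $\ex{J_z^2}_U$: your $N(N-1)/4$ (which is the correct expansion of $\tfrac14(\sum_i\sigma_z^{(i)})^2$---check $\ex{J_z^2}=N^2/4$ on $\ket{0}^{\otimes N}$) versus the $N(N-1)/2$ used in the paper's appendix, so the stated $(\alpha,\beta,\gamma)$ are consistent only with the latter. The constructive move here is to flag the mismatch and rescale the parameters so that $f_U=\vec{n}^\top C\vec{n}$; instead you tried to push through with the shifted matrix, and that is where the argument breaks.

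The final dichotomy is genuinely false: knowing $\mathrm{spec}(C-\vec{a}\vec{a}^\top)$ together with $|\vec{a}|^2$ does not decide $C\geq 0$, and physicality does not force $\vec{a}$ to align with the negative eigenvector in the borderline case. Concretely, take $N=2$ and compare the two symmetric states with $\vec{a}=(0,0,\sqrt{0.2})$ and $T=\mathrm{diag}(0.35,\,0.35,\,0.3)$ versus $T'=\mathrm{diag}(0.35,\,-0.1,\,0.75)$. Both are valid states ($\tr T=\tr T'=1$, and positivity of the $4\times 4$ density matrices is easily checked), and both give $C-\vec{a}\vec{a}^\top$ with spectrum $\{0.35,\,0.35,\,-0.1\}$ and the same $|\vec{a}|^2$, so all of your moments coincide. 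Yet $C=\mathrm{diag}(0.35,\,0.35,\,0.1)\geq 0$ (separable) for the first, while $C'=\mathrm{diag}(0.35,\,-0.1,\,0.55)\not\geq 0$ (entangled) for the second: there the negative direction of $C'-\vec{a}\vec{a}^\top$ is orthogonal to $\vec{a}$, contradicting your forced-alignment claim. So the route through $\mathrm{spec}(C-\vec{a}\vec{a}^\top)$ cannot be completed; the proof must use the normalization in which $f_U$ reproduces the covariance $\vec{n}^\top C\vec{n}$ exactly, after which your own spectral-reconstruction step finishes the argument.
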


The proof of this Observation is given in Appendix~\ref{ap:neccesuffi}.
As the proof's main idea, we will first explain the known fact
that a necessary and sufficient condition is equivalent to the violation of $C \geq 0$ for the covariance matrix
$C_{ij} = \ex{\sigma_i \otimes \sigma_j}_{\vr_{ab}}
    -\ex{\sigma_i}_{\vr_a}
    \ex{\sigma_j}_{\vr_b},$
with the reduced state
$\vr_{ab}$ for any choice $a,b$, for details, see Refs.~\cite{devi2007constraints, toth2009entanglement, bohnet2016partial}.
Then we will analytically show that the violation can be determined from the moments $\mathcal{J}^{(r)}(\vr)$ for $r=1,2,3$,
and we will provide an explicit procedure to decide spin-squeezing entanglement.

We remark that any $N$-qubit permutationally symmetric state can be given by a density matrix in the so-called Dicke basis.
For $m$ excitations, the Dicke state is defined as
$\ket{{D}_{N,m}} =\binom{N}{m}^{-1/2} 
\sum_k \pi_k \big(\ket{1}^{\otimes m}\otimes \ket{0}^{\otimes (N-m)}\big)$,
where the summation is over the different permutations of the qubits
and $m$ is an integer such that $0 \leq m \leq N$.
A concrete example is the state
$|{D_{3,1}}\rangle = (\ket{001} + \ket{010} + \ket{100})/\sqrt{3}$.
Then states mixed from Dicke, W, and GHZ states are permutationally symmetric.
Accordingly, Observation~\ref{ob:neccesuffi} allows us to detect such spin-squeezed GME states in a CRFI manner.

\vspace{1em}
{\it Multiparticle bound entanglement.---}Next, let us consider the more general case where $\vr$ is not permutationally symmetric.
Even in this case, our approach with collective randomized measurements is effective for detecting spin-squeezing entanglement.
We can present the second result in this paper:
\begin{observation}\label{ob:singletcriterion}
For an $N$-qubit state $\vr$, the first moment $\mathcal{J}^{(1)}$ with $(\alpha, \beta, \gamma) = (3,0,0)$ is given by
    \begin{align}\label{eq:sumofvariance}
    \mathcal{J}^{(1)}(\vr)
    = \sum_{l=x,y,z} \va{J_l}.
    \end{align}
Any $N$-qubit fully separable state obeys
\begin{align}
    \mathcal{J}^{(1)}(\vr)
    \geq \frac{N}{2}.
    \label{eq:singletdet}
\end{align}
Then violation implies the presence of multipartite entanglement.
\end{observation}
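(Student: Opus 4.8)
The plan is to handle the two assertions in turn: first evaluate the Haar integral defining $\mathcal{J}^{(1)}$ to identify it with a sum of collective variances, and then establish the lower bound on that quantity for fully separable states.

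For the first assertion, the key observation is that conjugating $J_z$ by the collective unitary $U^{\otimes N}=e^{i\vec u\cdot\vec J}$ merely rotates the measured direction, $U^{\otimes N}J_z(U^\dagger)^{\otimes N}=\sum_{l}n_l(U)\,J_l$, where $\vec n(U)$ is the image of the $z$-axis under the $SO(3)$ rotation associated with $U$; as $U$ runs over the Haar measure, $\vec n$ is uniform on the unit sphere $S^2$. With $(\alpha,\beta,\gamma)=(3,0,0)$ one has $f_U(\vr)=3\va{J_z}_U=3\sum_{l,m}n_l n_m\big(\ex{J_lJ_m}-\ex{J_l}\ex{J_m}\big)$, the products $n_l n_m$ automatically symmetrizing the operators. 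Inserting the elementary spherical average $\int_{S^2} d\vec n\; n_l n_m=\tfrac13\delta_{lm}$ then gives $\mathcal{J}^{(1)}(\vr)=\sum_{l=x,y,z}\big(\ex{J_l^2}-\ex{J_l}^2\big)=\sum_{l=x,y,z}\va{J_l}$, which is Eq.~(\ref{eq:sumofvariance}).

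For the bound, I would first verify it on a pure product state $\ket{\psi}=\bigotimes_{i=1}^N\ket{a^{(i)}}$. Writing $J_l=\tfrac12\sum_i\sigma_l^{(i)}$ and noting that for a product state all cross terms $\ex{\sigma_l^{(i)}\otimes\sigma_l^{(j)}}-\ex{\sigma_l^{(i)}}\ex{\sigma_l^{(j)}}$ with $i\neq j$ vanish, one obtains $\va{J_l}=\tfrac14\sum_i\big(1-\ex{\sigma_l^{(i)}}^2\big)$; summing over $l$ and using $\sum_l\ex{\sigma_l^{(i)}}^2=1$ for pure single-qubit states yields $\sum_l\va{J_l}=\tfrac14\sum_i(3-1)=N/2$. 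It remains to extend the inequality from pure product states to an arbitrary fully separable state as in Eq.~(\ref{eq-entanglement}): since $\ex{J_l^2}$ is linear in $\vr$ while $-\ex{J_l}^2$ is concave, each $\va{J_l}$ and hence $\mathcal{J}^{(1)}$ is concave in $\vr$, so $\mathcal{J}^{(1)}\big(\sum_k p_k\vr_k\big)\ge\sum_k p_k\,\mathcal{J}^{(1)}(\vr_k)\ge\sum_k p_k\cdot N/2=N/2$. The contrapositive is the claimed criterion: $\mathcal{J}^{(1)}(\vr)<N/2$ excludes full separability and thus certifies multipartite entanglement. One can further note that the bound is tight, being saturated already by pure product states, and maximally violated, $\mathcal{J}^{(1)}=0$, by many-body spin-singlet states, consistent with Fig.~\ref{Fig1:sketch}.

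All the calculations are short; the only delicate point is that the value $N/2$ is checked only on pure product states, and the concavity of the variance is what guarantees it persists under classical mixing. I therefore expect the concavity argument --- together with fixing the convention relating $U$ to its $SO(3)$ image so that the direction $\vec n$ is genuinely Haar-uniform on $S^2$ --- to be the steps most worth writing out carefully.
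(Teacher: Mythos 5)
Your proof is correct, and it differs from the paper's in two respects worth noting. For Eq.~(\ref{eq:sumofvariance}) the paper expands $\va{J_z}_U$ into one- and two-body marginals and evaluates each Haar integral with the Weingarten-type formula $\int dU\,\tr[\sigma_k U\sigma_z U^\dagger]\tr[\sigma_l U\sigma_z U^\dagger]=\tfrac43\delta_{kl}$; you instead work directly at the level of the collective operators, using that $U^{\otimes N}J_z(U^\dagger)^{\otimes N}=\sum_l n_l J_l$ with $\vec n=R(U)\hat z$ uniform on $S^2$ (since the adjoint map pushes Haar measure on $SU(2)$ to Haar measure on $SO(3)$), and then apply $\int n_ln_m=\tfrac13\delta_{lm}$. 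These are the same integral in different clothing --- the paper's formula is exactly $4\int n_kn_l$ --- but your version avoids decomposing into reduced states and makes the spherical-two-design interpretation (which the paper only remarks on) the actual mechanism of the proof; you also correctly handle the operator-ordering point via the symmetry of $n_ln_m$. For the bound (\ref{eq:singletdet}) the paper gives no proof at all, citing T\'oth (2004); your argument --- exact value $N/2$ on pure product states from the vanishing of cross-covariances and $\sum_l\ex{\sigma_l}^2=1$, extended to mixtures by concavity of the variance --- is the standard derivation and is complete, so your write-up is in this respect more self-contained than the paper's.
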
 

The proof of this Observation is given in Appendix~\ref{ap:singletcriterion}.
The criterion in Eq.~(\ref{eq:singletdet}) itself was already established~\cite{toth2004entanglement},
so we only have to show the derivation of Eq.~(\ref{eq:sumofvariance}). 
While the proof employs Weingarten calculus~\cite{collins2022weingarten} to evaluate 
the Haar integrals, we note that the relation in Eq.~(\ref{eq:sumofvariance}) can be understood 
in the context of quantum designs. These allow to
replace unitary integrals of certain polynomials
by finite sums. In the present case, one would 
require spherical two designs~\cite{seymour1984averaging,ketterer2019characterizing,cieslinski2023analysing}.

The criterion in Eq.~(\ref{eq:singletdet}) can be maximally violated by the so-called many-body
spin singlet states $\vr_{\text{singlet}}$~\cite{eggeling2001separability, toth2004entanglement, toth2007optimal, toth2009spin, toth2010generation, urizar2013macroscopic,behbood2014generation,kong2020measurement}.
A pure singlet state 
is defined as a state invariant under any collective unitary:
$U^{\otimes N} \ket{\Psi_{\text{singlet}}}
= e^{i \theta} \ket{\Psi_{\text{singlet}}}$.
That is, it is simultaneous eigenstates of $J_l$ for $l=x,y,z$
with zero eigenvalue.
Many-body spin singlet states $\vr_{\text{singlet}}$ are mixtures of pure singlet states and are also invariant under any collective local unitary:
$U^{\otimes N} \vr_{\text{singlet}}(U^\dagger)^{\otimes N}
=\vr_{\text{singlet}}$.
Since the state $\vr_{\text{singlet}}$ has $\ex{J_l^k}= 0$ for and any integer $k$, it is at the center of the collective Bloch sphere (see Fig.~\ref{Fig1:sketch}).

Moreover, the criterion of Eq.~(\ref{eq:singletdet}) is known to be a very strong entanglement condition.
In fact, it can detect the so-called multiparticle bound entanglement~\cite{toth2007optimal,toth2009spin}, which can be positive under partial transposition (PPT) for all bipartitions~\cite{peres1996separability, horodecki2001separability}.
We stress that Observation~\ref{ob:singletcriterion} only requires second moments over Haar unitary integrals.
This shows that collective randomized measurements are fundamentally different from previous randomized measurements~\cite{imai2021bound,liu2022detecting}.

In addition, we mention that Observation~\ref{ob:singletcriterion} can be used in
probing many-body Bell nonlocality~\cite{frerot2021detecting,muller2021inferring} and
improving quantum metrology~\cite{toth2012multipartite,hyllus2012fisher,toth2014quantum}.
Also, we will discuss its high-dimensional generalizations in Appendix~\ref{ap:singletcriterion}.
Finally, we remark that the criterion in Eq.~(\ref{eq:singletdet}) is known as one of the optimal inequalities to detect spin-squeezing entanglement~\cite{toth2007optimal, toth2009spin}.

\vspace{1em}
{\it Antisymmetric entanglement.---}So far, we have considered the moments $\mathcal{J}^{(r)} (\vr)$
based on the function $f_U(\vr)$ in Eq.~(\ref{eq:funcJ}).
Since $f_U(\vr)$ contained
two-body quantum correlations via $\ex{J_z^2}_U$ and is related to
the collective angular momenta as symmetric observables, we detected entanglement with
large two-body correlations and certain symmetries, such as Dicke and singlet states.
In the following, we will develop collective randomized measurements to analyze quantum
systems in terms of non-symmetric observables with three-body correlations.

To proceed, let us begin by considering the three-qubit observable
$\mathcal{S} (\sigma_x  \otimes  \sigma_y  \otimes  \sigma_z)
\equiv
\sigma_x  \otimes  \sigma_y  \otimes  \sigma_z
+ \sigma_y  \otimes  \sigma_z  \otimes  \sigma_x
+ \sigma_z  \otimes  \sigma_x  \otimes  \sigma_y
+ \cdots$,
  where $\mathcal{S}$ denotes the average over all permutations of indices $x,y,z$.
This observable is invariant under any particle exchange:
$\swap_{ab} \, \mathcal{S} (\sigma_x  \otimes  \sigma_y  \otimes  \sigma_z) \, \swap_{ab}
   = \mathcal{S} (\sigma_x  \otimes  \sigma_y  \otimes  \sigma_z)$,
with $\swap_{ab}$ being the SWAP operator for any $a,b$.
The $N$-qubit extension of this observable can be represented by the product of collective angular momenta
\begin{equation}
    \mathcal{O}_{\mathcal{S}}
    \equiv
    \sum_{i < j < k}
    \! \mathcal{S}
    \left(
    \sigma_x^{(i)} \otimes \sigma_y^{(j)} \otimes \sigma_z^{(k)}
    \right)
    = \frac{8}{3!}
    \mathcal{S} (J_x J_y J_z),
    \label{eq:symmetrictensor}    
\end{equation}
where
$\mathcal{S} (J_x J_y J_z) = J_x J_y J_z + J_y J_z J_x + J_z J_x J_y + \cdots$ and
$\swap_{ab} \mathcal{O}_{\mathcal{S}} \swap_{ab} = \mathcal{O}_{\mathcal{S}}$.
In general, any combination of products of collective angular momenta remains permutationally invariant under particle exchange~\cite{toth2009practical}.

An associated operator with $\mathcal{O}_{\mathcal{S}}$ from Eq.~(\ref{eq:symmetrictensor}) is the antisymmetric 
observable
\begin{align}
    \mathcal{O}_{\mathcal{A}}
    &\equiv 
    \sum_{i < j < k}
    \mathcal{A}
    \left(
    \sigma_x^{(i)} \otimes \sigma_y^{(j)} \otimes \sigma_z^{(k)}
    \right),
\end{align}
where $\mathcal{A} (\sigma_x  \otimes  \sigma_y  \otimes  \sigma_z)$ denotes the antisymmetrization
of $\sigma_x  \otimes  \sigma_y  \otimes  \sigma_z$ by taking the sum over even permutations and subtracting
the sum over odd permutations of indices $x,y,z$.
Clearly, $\mathcal{O}_{\mathcal{A}}$ cannot be constructed from collective angular momenta.

We have seen that symmetric observables based on collective angular momenta can detect symmetric entanglement with collective randomized measurements.
Then, one may wonder if antisymmetric observables such as $\mathcal{O}_{\mathcal{A}}$ can characterize antisymmetric entanglement.
Similarly to Eq.~(\ref{eq:momentJr}), we define the average over random collective local unitaries
\begin{align}
    \mathcal{T}(\vr)
    = \int dU \,
    \tr\left[
    \vr U^{\otimes N}
    \mathcal{O}_{\mathcal{A}}
    (U^\dagger)^{\otimes N}
    \right].
\end{align}
For this we can formulate the third result in this paper:
\begin{observation}\label{ob:antisymmetricdet}
The average $\mathcal{T}(\vr)$ is given by
\begin{align}
    \mathcal{T}(\vr)
    \!=\!
    \tr[\vr \mathcal{O}_{\mathcal{A}}]
    \!=\! \! \!
    \sum_{i< j< k}
    \
    \sum_{a,b,c}
    \!
    \varepsilon_{abc}
    \ex{\sigma_a^{(i)}
    \! \otimes \!
    \sigma_b^{(j)}
    \! \otimes \!
    \sigma_c^{(k)}}_\vr,
    \label{eq:mathcalTsimpleexpre}
\end{align}
where
$\varepsilon_{abc}$ denotes the Levi-Civita symbol 
for $a,b,c=x,y,z$.
Any $N$-qubit fully separable state can obey a certain tight bound
\begin{align}
    |\mathcal{T}(\vr)| \leq p_{\text{fs}}^{(N)},
    \label{eq:antisymfullsepbound}
\end{align}
where $p_{\text{fs}}^{(N)}$ can be computed analytically for $N=3$ and numerically for up to 
$N \leq 7$ and is, up to numerical precision, given by 
$p_{\text{fs}}^{(N)} = {N^2 \cot \left({\pi }/{N}\right)}/{3 \sqrt{3}}$.
Then violation implies the presence of multipartite entanglement.
\end{observation}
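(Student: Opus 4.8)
The plan is to prove the three assertions in turn: that the Haar average collapses to $\tr[\vr\mathcal{O}_{\mathcal A}]$ and equals the stated correlation sum, that fully separable states respect the bound, and that the bound equals $p_{\text{fs}}^{(N)}$. \emph{Step 1 (the average is trivial).} The antisymmetrization $\mathcal{A}(\sigma_x\otimes\sigma_y\otimes\sigma_z)$ is by definition $\sum_{a,b,c}\varepsilon_{abc}\,\sigma_a\otimes\sigma_b\otimes\sigma_c$, so $\mathcal{O}_{\mathcal A}=\sum_{i<j<k}\sum_{a,b,c}\varepsilon_{abc}\,\sigma_a^{(i)}\otimes\sigma_b^{(j)}\otimes\sigma_c^{(k)}$ already. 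Under a collective single-qubit unitary $V^{\otimes N}$ each Pauli operator rotates as $V\sigma_a V^\dagger=\sum_b O_{ba}\sigma_b$ with $O\in SO(3)$, and $\varepsilon$ is the invariant alternating tensor of $SO(3)$, $\sum_{a,b,c}\varepsilon_{abc}O_{a'a}O_{b'b}O_{c'c}=(\det O)\,\varepsilon_{a'b'c'}=\varepsilon_{a'b'c'}$. Hence every summand of $\mathcal{O}_{\mathcal A}$, and therefore $\mathcal{O}_{\mathcal A}$ itself, is fixed by the adjoint action, $V^{\otimes N}\mathcal{O}_{\mathcal A}(V^\dagger)^{\otimes N}=\mathcal{O}_{\mathcal A}$. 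The integrand defining $\mathcal{T}(\vr)$ is then independent of $U$, the Haar integral is trivial, $\mathcal{T}(\vr)=\tr[\vr\mathcal{O}_{\mathcal A}]$, and linearity of the trace gives Eq.~(\ref{eq:mathcalTsimpleexpre}).

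\emph{Step 2 (reduction to a geometric optimization).} Since $\mathcal{T}$ is linear in $\vr$, over the convex set of fully separable states it is extremized at a pure product state $\vr=\bigotimes_{i=1}^N\ket{a^{(i)}}\!\bra{a^{(i)}}$. For such a state $\ex{\sigma_c^{(i)}}=n_c^{(i)}$ with $\vec n^{(i)}$ a unit Bloch vector, so each three-body term is a scalar triple product and
\[
\mathcal{T}(\vr)=\sum_{1\le i<j<k\le N}\det\!\big[\vec n^{(i)},\vec n^{(j)},\vec n^{(k)}\big].
\]
Flipping all Bloch vectors, $\vec n^{(i)}\mapsto-\vec n^{(i)}$, reverses the sign of $\mathcal{T}$, so the bound $|\mathcal{T}(\vr_{\text{fs}})|\le p_{\text{fs}}^{(N)}$ is equivalent to $p_{\text{fs}}^{(N)}$ being the maximum of the right-hand side over all ordered tuples of $N$ unit vectors on the sphere.

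\emph{Step 3 (evaluating the maximum).} For $N=3$ the sum is a single determinant of unit vectors, hence bounded by $1$ by Hadamard's inequality, with equality for an orthonormal frame; this matches $p_{\text{fs}}^{(3)}=\tfrac{9}{3\sqrt3}\cot\tfrac{\pi}{3}=1$. For general $N$ I would evaluate the candidate ring configuration $\vec n^{(i)}=\big(\sin\theta\cos\tfrac{2\pi i}{N},\,\sin\theta\sin\tfrac{2\pi i}{N},\,\cos\theta\big)$, equally spaced in azimuth along the particle order. Expanding each $3\times3$ determinant along its constant last column, regrouping the triples by the index gaps $j-i$ and $k-j$, and summing with $\sum_{s=1}^{N-1}\sin\tfrac{2\pi s}{N}=0$ and $\sum_{s=1}^{N-1}s\sin\tfrac{2\pi s}{N}=-\tfrac{N}{2}\cot\tfrac{\pi}{N}$, one obtains $\mathcal{T}=\tfrac{N^2}{2}\cot\tfrac{\pi}{N}\,\cos\theta\sin^2\theta$, which is maximized at $\cos^2\theta=\tfrac13$ with value $\tfrac{N^2}{3\sqrt3}\cot\tfrac{\pi}{N}=p_{\text{fs}}^{(N)}$. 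Global optimality then follows from Hadamard's bound for $N=3$, and for $4\le N\le 7$ from a numerical maximization over the configuration space of $2N$ angles (up to a global rotation) confirming that no extremum exceeds this value.

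\emph{Main obstacle.} The genuinely hard step is the global-optimality claim for general $N$. The objective is a cubic polynomial on the product of spheres $(S^2)^{\times N}$, which is neither concave nor sign-definite and depends explicitly on the particle ordering, so standard convexity and symmetrization arguments do not single out the maximizer; ruling out helical or otherwise irregular competitors for every $N$ is precisely what forces the present restriction to the numerically verified cases $N\le 7$.
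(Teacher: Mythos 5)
Your proposal is correct, and it reaches the same conclusions as the paper while differing in two places worth noting. For the first claim, you observe that $\mathcal{O}_{\mathcal{A}}$ is itself invariant under the adjoint action of $V^{\otimes N}$, because the Levi-Civita tensor is an $SO(3)$ invariant, so the Haar integral is trivial and $\mathcal{T}(\vr)=\tr[\vr\,\mathcal{O}_{\mathcal{A}}]$ follows at once. The paper instead computes the integral head-on using the third-moment formula $\int dU\,\tr[\sigma_a U\sigma_x U^\dagger]\tr[\sigma_b U\sigma_y U^\dagger]\tr[\sigma_c U\sigma_z U^\dagger]=\tfrac{4}{3}\varepsilon_{abc}$ and then contracts two epsilon tensors; your invariance argument is shorter and makes the structural reason for the collapse transparent, while the paper's calculation stays within the uniform moment-formula machinery used throughout its appendices. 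For the separable bound, both you and the paper reduce by convexity/linearity to pure product states and to maximizing $\sum_{i<j<k}\det[\vec n^{(i)},\vec n^{(j)},\vec n^{(k)}]$ over unit Bloch vectors. At $N=3$ you invoke Hadamard's inequality for the single determinant, whereas the paper proves the same bound of $1$ via a Cauchy--Schwarz argument (its Lemma and Corollary in Appendix C, fixing $\vr^C=\ket{0}\!\bra{0}$ without loss of generality); these are equivalent in strength. Your closed-form evaluation of the equal-azimuth ring configuration, giving $\mathcal{T}=\tfrac{N^2}{2}\cot(\pi/N)\cos\theta\sin^2\theta$ with optimum at $\cos^2\theta=\tfrac13$, is a useful addition: the paper only states the conjectured optimal angles (note $\theta_i=2\tan^{-1}\sqrt{2-\sqrt3}$ is exactly $\cos\theta=1/\sqrt3$) and the resulting value, without exhibiting the intermediate formula. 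Finally, you correctly identify that global optimality for $4\le N\le 7$ rests on numerical search; the paper makes exactly the same concession ("we collect evidence that there may exist the tight bound"), so this is not a gap relative to the paper's own proof.
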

The derivation of Eq.~(\ref{eq:mathcalTsimpleexpre}) and the explanation of Eq.~(\ref{eq:antisymfullsepbound})
are given in Appendix~\ref{ap:antisymmetricdet}. Also, we will analytically show that any three-qubit biseparable state
obeys $|\mathcal{T}(\vr)| \leq 2$, see Appendix~\ref{ap:antisymmetricdet}, where this violation signals GME states.

\begin{figure}[t]
    \centering
    \includegraphics[width=0.9\columnwidth]{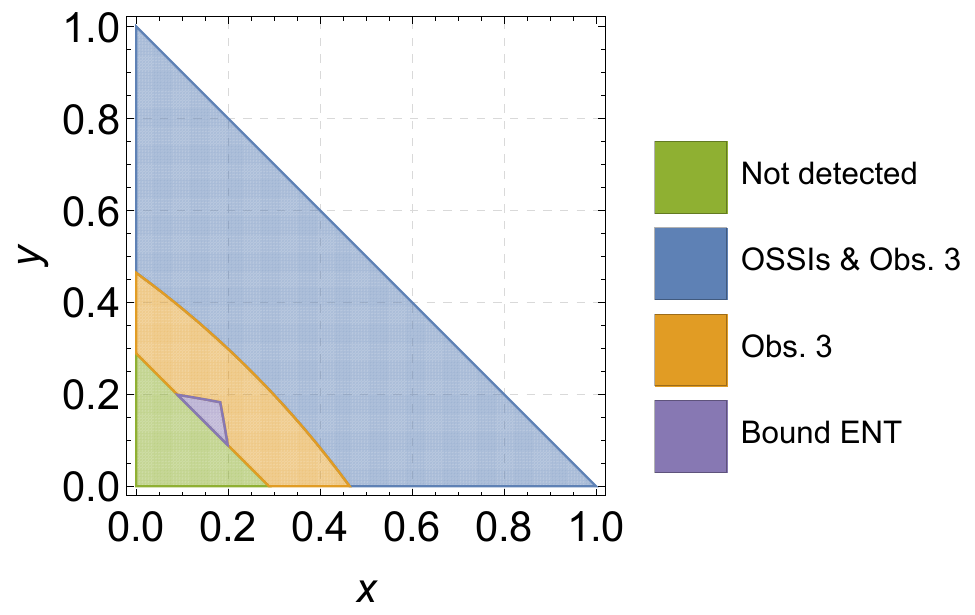}
    \caption{
    Entanglement criteria for the mixed state
    in Eq.~(\ref{eq:mixedteststate}) for $N=3$ in the $x-y$ plane.    
    The fully separable states are contained in green area,
    which obeys all the optimal spin-squeezing
    inequalities (OSSIs) previously known with optimal measurement directions~\cite{toth2007optimal, toth2009spin} and also our criterion in Obs.~\ref{ob:antisymmetricdet}.
    The blue area corresponds to the spin-squeezed entangled states that can be detected by all OSSIs and Obs.~\ref{ob:antisymmetricdet}.
    The yellow and purple areas correspond to the entangled states that cannot be detected by all OSSIs but can be detected by Obs.~\ref{ob:antisymmetricdet},
    thus marking the improvement of this paper compared with previous results.
    In particular, the purple area corresponds to the multiparticle bound
    entangled states that are not detected by the PPT criterion for all bipartitions but detected by Obs.~\ref{ob:antisymmetricdet}.}
    \label{Fig2:casethreequbit}
\end{figure}

Let us test our criterion with the two-parameter family of states
\begin{align}
    \vr_{x,y}
    = x\ket{\zeta_{N}}\! \bra{\zeta_{N}}
    + y\ket{\Tilde{\zeta}_{N}}\! \bra{\Tilde{\zeta}_{N}}
    + \frac{1-x-y}{2^N}\eins_{2}^{\otimes N},
    \label{eq:mixedteststate}
\end{align}
where
$0 \leq x, y \leq 1$.
Here $\ket{\zeta_{N}}$ is the so-called phased Dicke state \cite{krammer2009multipartite,chiuri2010hyperentangled,cramer2011measuring,marty2014quantifying,borrelli2014witnessing,marty2017multiparticle,li2021verification}, up to normalization,
\begin{align}
    \ket{\zeta_{N}}
    =
    \sum_{k=1}^N
    \frac{e^{\frac{2\pi i k}{N}}}{\sqrt{N}}
    \ket{0}_1
    \ket{0}_2
    \cdots
    \ket{1}_k
    \cdots
    \ket{0}_{N-1}
    \ket{0}_N,
\end{align}
and the state
$\ket{\Tilde{\zeta}_{N}} = \sigma_x^{\otimes N} \ket{\zeta_{N}}$ 
with $\braket{\zeta_{N}|\Tilde{\zeta}_{N}} = 0$.
Note that the phased Dicke state is not equivalent to the Dicke state $\ket{{D}_{N,1}}$ under collective unitary transformations.
In Fig.~\ref{Fig2:casethreequbit}, we illustrate the criterion of Observation~\ref{ob:antisymmetricdet} for the state $\vr_{x,y}$
for $N=3$ on the $x$-$y$ plane.
Note that the subspace spanned by $\ket{\zeta_{3}}$ and $\ket{\Tilde{\zeta}_{3}}$ is, after noncollective local
unitaries, equivalent to the maximally entangled subspace
of three qubits, as characterized in Ref.~\cite{steinberg2022maximizing}.

Our result allows us to detect entangled states that cannot be detected not only for Eq.~(\ref{eq:singletdet}) but also for all the other optimal spin-squeezing inequalities previously known with optimal measurement directions~\cite{toth2007optimal, toth2009spin}.
Moreover, the multipartite bound entanglement of $\vr_{x,y}$ can also be detected.
For $N\geq 4$, similar results are obtained, see Appendix~\ref{ap:antisymmetricdet}

The inequality (\ref{eq:antisymfullsepbound}) can be maximally
violated by several GME states.
For small $N$, we have numerically confirmed that $\ket{\zeta_{N}}$ and $\ket{\Tilde{\zeta}_{N}}$
can reach the maximal violation, that is, they can be the eigenstates with the largest singular values of $\mathcal{O}_A$.
For $N=3$, the eigenvalue decomposition of $\mathcal{O}_A$ is given by
$\mathcal{O}_A \!=\! 2\sqrt{3}
(\ket{\zeta_{3}}\! \bra{\zeta_{3}}
\!+\! \ket{\Tilde{\zeta}_{3}}\! \bra{\Tilde{\zeta}_{3}}
\!-\! \ket{\mu_{3}}\! \bra{\mu_{3}}
\!-\! \ket{\Tilde{\mu}_{3}}\! \bra{\Tilde{\mu}_{3}})$,
where
$\ket{\mu_{3}}$ is a state obtained by changing $e^{\frac{2\pi i k}{3}}$ in $\ket{\zeta_{3}}$ to $e^{\frac{2\pi i (4-k)}{3}}$
and 
$\ket{\Tilde{\mu}_{3}} = \sigma_x^{\otimes 3} \ket{\mu_{3}}$.
Here all the eigenstates are mutually orthogonal, and the dimension of this eigensubspace is four, which coincides with the maximal dimension of a three-qubit completely entangled subspace that contains no full product state~\cite{parthasarathy2004maximal,demianowicz2018unextendible}.
Finally, we mention that for cases with $N=4,5,6$, the matrix rank of $\mathcal{O}_{\mathcal{A}}$ is respectively given by $6,24,38$.

\vspace{1em}
{\it Entanglement between two ensembles.---}Let us apply the strategy of collective randomized measurements to another scenario where two ensembles are spatially separated~
\cite{
lange2018entanglement,
fadel2018spatial,
kunkel2018spatially,
shin2019bell,
fadel2023multiparameter,
vitagliano2023number}.
We denote $\vr_{AB}$ as a $2N$-qubit state that contains the two ensembles of $N$ spin-$\frac{1}{2}$ particles,
where $\vr_{AB} \in \mathcal{H}_A \otimes \mathcal{H}_B$ with $\mathcal{H}_{X}  = \mathcal{H}_2^{\otimes N}$ for $X=A, B$.
Supposing that each ensemble can be controlled individually, we can perform the collective randomized measurements to obtain the moments $\mathcal{J}_X^{(r)}$ with a fixed choice $(\alpha, \beta, \gamma)$.
Note that a related approach to detect entanglement between two spin ensembles has been discussed in Ref.~\cite{kofler2006entanglement}.

The total collective observables are given by
$J_{l}^{\pm} = J_{l, A} \pm J_{l, B}$,
where
$J_{l, X} = \frac{1}{2}\sum_{i=1}^N \sigma_{l}^{(X_i)} \in \mathcal{H}_{X}$
for $l=x,y,z$ and Pauli matrices $\sigma_{l}^{(X_i)}$ acting on $X_i$-th subsystem in the ensemble $X=A,B$.
Note that one can also formulate the following result considering a more general case: $J_{k,l}^{\pm} = J_{k, A} \pm J_{l, B}$.
In a similar manner to Eq.~(\ref{eq:Jz^2}), we can introduce the random variances
$\va{J_{z}^{\pm}}_{U_{AB}}$ with
$U_{AB} = U_A \otimes U_B$.
Denoting the gap as
$\eta_{U_{AB}}
\equiv \va{J_{z}^{+}}_{U_{AB}} - \va{J_{z}^{-}}_{U_{AB}}$,
let us consider its moment
\begin{align}
    \mathcal{G}_{AB}^{(r)}
    = g\int dU_{AB}\,
    [\eta_{U_{AB}}]^r,
\end{align}
where $g$ is a real constant parameter.

Now we can present the following criterion:
\begin{observation}\label{ob:twoensembles}
For a $2N$-qubit state $\vr_{AB}$
with the permutationally symmetric reduced states,
any separable $\vr_{AB}$ obeys
\begin{align}
    \mathcal{G}_{AB}^{(2)}
    + \mathcal{J}_A^{(1)}
    + \mathcal{J}_B^{(1)}
    - \mathcal{J}_A^{(1)}
    \mathcal{J}_B^{(1)}
    \leq 1,
    \label{eq:twoensembledetection}
\end{align}
where
$g = ({3}/{N^2})^2$
and 
$(\alpha, \beta, \gamma) = (0,12/N^2,0)$.
\end{observation}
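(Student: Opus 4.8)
\emph{Proof proposal.} The plan is to reduce the whole inequality to elementary facts about single-qubit Bloch vectors, with the only real work being a Cauchy--Schwarz bound on a cross-covariance matrix. First I would evaluate the two ingredients in closed form. Since randomizing $U^{\otimes N}J_z(U^\dagger)^{\otimes N}$ over the Haar measure is the same as averaging $\vec n\cdot\vec J_X$ over a uniform unit vector $\vec n$ (a spherical two-design already suffices, as for Observation~\ref{ob:singletcriterion}), and $(\alpha,\beta,\gamma)=(0,12/N^2,0)$ makes $f_U=(12/N^2)\ex{J_z}_U^2$, the identity $\int d\vec n\,n_i n_j=\tfrac13\delta_{ij}$ gives $\mathcal{J}_X^{(1)}=(4/N^2)|\ex{\vec J_X}|^2$; permutational symmetry of $\vr_X$ turns this into $\mathcal{J}_X^{(1)}=|\vec r_X|^2\in[0,1]$ for the single-qubit Bloch vector $\vec r_X$. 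For the second ingredient, writing $\tilde J_A=\vec a\cdot\vec J_A$, $\tilde J_B=\vec b\cdot\vec J_B$ for the two independent random rotations and using $[\tilde J_A,\tilde J_B]=0$, the local variances cancel in the gap, so
\begin{equation}
\eta_{U_{AB}}=4\sum_{i,j}a_i b_j\,C_{ij},\qquad C_{ij}=\ex{J_{i,A}\otimes J_{j,B}}-\ex{J_{i,A}}\ex{J_{j,B}} .
\end{equation}
Squaring, integrating with $\int d\vec a\,a_i a_k=\tfrac13\delta_{ik}$ (likewise in $\vec b$), and inserting $g=(3/N^2)^2$ yields $\mathcal{G}_{AB}^{(2)}=(16/N^4)\sum_{i,j}C_{ij}^2$. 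Both evaluations are Weingarten/two-design bookkeeping, and the constants $12/N^2$ and $g$ are visibly tuned so that all powers of $N$ drop out.

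It then remains to show, for a separable $\vr_{AB}=\sum_m p_m\,\vr_A^{(m)}\otimes\vr_B^{(m)}$, that $(16/N^4)\sum_{i,j}C_{ij}^2\le(1-|\vec r_A|^2)(1-|\vec r_B|^2)$. I would introduce the qubit-averaged Bloch vectors $\vec X^{(m)}=\tfrac1N\sum_k\vec r^{(m)}_{A_k}$ and $\vec Y^{(m)}=\tfrac1N\sum_k\vec r^{(m)}_{B_k}$ of the local states; by the triangle inequality $|\vec X^{(m)}|\le1$, $|\vec Y^{(m)}|\le1$, and $\vec r_A=\sum_m p_m\vec X^{(m)}$, $\vec r_B=\sum_m p_m\vec Y^{(m)}$. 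Then $\ex{J_{i,A}}_{\vr_A^{(m)}}=\tfrac N2 X_i^{(m)}$, so $C_{ij}=\tfrac{N^2}{4}\,\mathrm{Cov}_p(X_i,Y_j)$ and the claim collapses to $\sum_{i,j}[\mathrm{Cov}_p(X_i,Y_j)]^2\le(1-|\vec r_A|^2)(1-|\vec r_B|^2)$. Writing $\mathrm{Cov}_p(X_i,Y_j)=\sum_m p_m(X_i^{(m)}-r_{A,i})(Y_j^{(m)}-r_{B,j})$, Cauchy--Schwarz for each pair $(i,j)$ followed by summation gives
\begin{equation}
\sum_{i,j}[\mathrm{Cov}_p(X_i,Y_j)]^2\le\Big(\sum_m p_m|\vec X^{(m)}-\vec r_A|^2\Big)\Big(\sum_m p_m|\vec Y^{(m)}-\vec r_B|^2\Big),
\end{equation}
and finally $\sum_m p_m|\vec X^{(m)}-\vec r_A|^2=\sum_m p_m|\vec X^{(m)}|^2-|\vec r_A|^2\le1-|\vec r_A|^2$ (and likewise for $B$), which closes the proof.

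The step I expect to be the crux is this last one: the temptation is to bound $\sum_{i,j}C_{ij}^2$ by an operator-norm estimate, but that is too weak here and one genuinely needs the entrywise Cauchy--Schwarz together with $\sum_m p_m|\vec X^{(m)}|^2\le1$. A secondary subtlety is that the local pieces $\vr_A^{(m)}$ of a separable decomposition need not be permutationally symmetric, so the argument has to run through the qubit-averaged vectors $\vec X^{(m)},\vec Y^{(m)}$ rather than a single marginal; the symmetry hypothesis is used only for the ensemble-averaged states $\vr_A,\vr_B$, to identify $\mathcal{J}_X^{(1)}$ with $|\vec r_X|^2$ and to express $\vec r_X$ through one qubit.
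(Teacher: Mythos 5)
Your proposal is correct. The evaluation of the moments coincides with the paper's: both reduce $\mathcal{J}_X^{(1)}$ to $|\vec{r}_X|^2$ and $\mathcal{G}_{AB}^{(2)}$ to the squared Frobenius norm of the cross-covariance matrix $C_{pq}=t_{pq}-a_pb_q$ of a single $A$--$B$ qubit pair (your collective normalization $\tfrac{16}{N^4}\sum_{ij}C_{ij}^2$ agrees with the paper's $\sum_{pq}C_{pq}^2$ after rescaling). Where you genuinely diverge is the separability bound. The paper rewrites the left-hand side as $\sum_{pq}(t_{pq}^2-2a_pb_qt_{pq})+\sum_p(a_p^2+b_p^2)$ and then cites Proposition 5 of Ref.~\cite{liu2022characterizingcorr}, the purity-based criterion $\tr(\vr_{XY}^2)+\tr(\vr_X^2)+\tr(\vr_Y^2)-2\tr[\vr_{XY}(\vr_X\otimes\vr_Y)]\leq 1$, applied to the (separable) two-qubit cross marginal. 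You instead prove the equivalent factorized statement $\mathcal{G}_{AB}^{(2)}\leq(1-\mathcal{J}_A^{(1)})(1-\mathcal{J}_B^{(1)})$ directly, via entrywise Cauchy--Schwarz on $\mathrm{Cov}_p(X_i,Y_j)$ over the separable decomposition together with $\sum_m p_m|\vec{X}^{(m)}|^2\leq 1$; this is in effect a self-contained, elementary re-derivation of the two-qubit case of that Proposition. Your route makes the geometric form of the bound transparent and correctly handles the fact that the components $\vr_A^{(m)}$ of a separable decomposition need not be symmetric (working through the qubit-averaged Bloch vectors), while the paper's route connects the criterion to an established purity-based condition, which is what suggests the higher-dimensional and multi-ensemble generalizations discussed in the appendix. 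Both arguments are sound.
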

The proof is given in Appendix~\ref{ap:twoensembles}.
As the proof's main idea, we will first simply evaluate the integrals on the left-hand side in Eq.~(\ref{eq:twoensembledetection}).
Then we will adopt the separability criterion presented in Ref.~\cite{liu2022characterizingcorr} (see, Proposition 5 there) in order to find the entanglement criterion in Eq.~(\ref{eq:twoensembledetection}).

The violation of this inequality allows us to detect entanglement between the spatially separated two ensembles.
In Appendix~\ref{ap:twoensembles},
we will demonstrate how the criterion in Eq.~(\ref{eq:twoensembledetection})
can characterize entanglement between two ensembles.
Also, we will show that Observation~\ref{ob:twoensembles} can be extended to the case of $m$ ensembles for $m\geq 3$.

\vspace{1em}
{\it Statistically significant tests.---}Finally, we note that the statistical analysis of collective randomized measurements is discussed in Appendix~\ref{ap:finitestatistics}. There, we will provide estimations for the necessary number of measurements required for entanglement detection with high confidence. Similar discussions can also be found in Refs.~\cite{ketterer2022statistically,wyderka2023probing,liu2023characterizing,wyderka2023complete,cieslinski2023analysing}.

\vspace{1em}
{\it Conclusion.---}We have introduced the concept of collective randomized measurements as a tool for CRFI quantum information processing.
Based on the framework, we have proposed systematic methods to characterize quantum correlations.
In particular, we showed that our approach has detected spin-squeezing entanglement, multipartite bound entanglement, and spatially-separated entanglement in two ensembles.

There are several directions for future research. First, it would be interesting to extend 
our work to higher-order scenarios involving $J_l^k$ for $k>2$. Such extensions may 
facilitate various connections, such as nonlinear spin 
squeezing~\cite{gessner2019metrological} or permutationally invariant Bell 
inequalities~\cite{tura2014detecting,tura2015nonlocality,wagner2017bell, guo2023detecting}.
Next, the inequality (\ref{eq:antisymfullsepbound}) resembles multipartite 
entanglement witnesses~\cite{guhne2009entanglement}, exploring this may 
lead to more advanced techniques for analyzing multipartite entanglement. Finally, while the 
standard version of randomized measurements has found many applications beyond entanglement detection, e.g., in quantum metrology, shadow tomography, or cross-platform verification~\cite{elben2023randomized,cieslinski2023analysing}, one may study
these possibilities also for the collective randomized measurements introduced here.

\vspace{1em}
{\it Acknowledgments.---}We would like to thank
Iagoba Apellaniz,
Jan Lennart B\" onsel,
Sophia Denker,
Kiara Hansenne,
Andreas Ketterer,
Matthias Kleinmann,
Yi Li,
Zhenhuan Liu,
H. Chau Nguyen,
Stefan Nimmrichter,
Martin Pl\'avala,
R\'obert Tr\'enyi,
Giuseppe Vitagliano,
Nikolai Wyderka,
Zhen-Peng Xu,
Benjamin Yadin,
and
Xiao-Dong Yu
for discussions.

This work was supported by
the DAAD,
the Deutsche Forschungsgemeinschaft (DFG, German Research Foundation, project numbers 447948357 and 440958198),
the Sino-German Center for Research Promotion (Project M-0294),
the ERC (Consolidator Grant 683107/TempoQ),
the German Ministry of Education and Research (Project QuKuK, BMBF Grant No. 16KIS1618K),
the Hungarian Scientific Research Fund (Grant No. 2019-2.1.7-ERA-NET-2021-00036),
and the National Research, Development and Innovation Office of Hungary (NKFIH) within the Quantum Information National Laboratory of Hungary.
We acknowledge the support of the EU 
(QuantERA MENTA, QuantERA QuSiED),
the Spanish MCIU 
(No. PCI2022-132947), and the Basque Government 
(No. IT1470-22). 
We acknowledge the support of the Grant~No.~PID2021-126273NB-I00 funded by MCIN/AEI/10.13039/501100011033 and by "ERDF A way of making Europe".  
We thank the "Frontline" Research Excellence Programme of the NKFIH (Grant No. KKP133827).
G.~T. acknowledges a  Bessel Research Award of the Humboldt Foundation.
We acknowledge Project no. TKP2021-NVA-04, which has been implemented with the support provided by the Ministry of Innovation and Technology of Hungary from the National Research, Development and Innovation Office (NKFIH), financed under the TKP2021-NVA funding scheme. 

\onecolumngrid
\appendix
\addtocounter{theorem}{-4}
\section{Proof of Observation~\ref{ob:neccesuffi}}\label{ap:neccesuffi}
\begin{observation}
    For an $N$-qubit permutationally symmetric state $\vr$,
    the first, second, and third moments $\mathcal{J}^{(r)}(\vr)$
    for $r=1,2,3$ completely characterize spin-squeezing entanglement.
    That is, a constructive procedure for achieving the necessary
    and sufficient condition is obtained by the moments with the parameters
    $\alpha =2/N_2,\,
    \beta =-2(N-2)/(NN_2),\,
    \gamma = -1/[2(N-1)]$
    and $N_2 = N(N-1)$.
\end{observation}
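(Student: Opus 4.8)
The plan is to reduce the statement to the two-qubit reduced state $\vr_{ab}$ and then carry out four steps: quote the known necessary-and-sufficient condition, rewrite $f_U$ as a quadratic form in the random direction, read the eigenvalues of the associated $3\times3$ matrix off the first three moments, and convert the resulting positivity test into an explicit decision. \textbf{Step~1.} As recalled in the main text, for permutationally symmetric $\vr$ the spin-squeezing entanglement coincides with the entanglement of $\vr_{ab}$, which is in turn equivalent to the failure of $C\geq 0$, where $C_{ij}=T_{ij}-s_is_j$ with $T_{ij}=\ex{\sigma_i\otimes\sigma_j}_{\vr_{ab}}$ and $s_i=\ex{\sigma_i}_{\vr_a}$ (Refs.~\cite{devi2007constraints,toth2009entanglement,bohnet2016partial}). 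I would also record the structural identity $\tr T=\sum_l\ex{\sigma_l\otimes\sigma_l}_{\vr_{ab}}=1$, valid because $\vr_{ab}$ is supported on the two-qubit symmetric (spin-$1$) subspace, on which $\sum_l\sigma_l\otimes\sigma_l=\eins$. \textbf{Step~2.} Writing $U^{\otimes N}J_z(U^\dagger)^{\otimes N}=\vec r\cdot\vec J$, where $\vec r$ is the image of $\hat z$ under the Bloch rotation associated with $U$ and is Haar-uniform on $S^2$, permutation symmetry gives $\ex{J_z}_U=\tfrac{N}{2}\,\vec r\cdot\vec s$ and $\ex{J_z^2}_U=\tfrac{N}{4}+\tfrac{N(N-1)}{4}\,\vec r^{\top}T\vec r$. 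Substituting these together with the quoted $\alpha,\beta,\gamma$ into $f_U(\vr)=\alpha\va{J_z}_U+\beta\ex{J_z}_U^2+\gamma$, the additive constants cancel and one is left with $f_U(\vr)=\vec r^{\top}M\vec r$, where $M:=\tfrac12 T-\vec s\vec s^{\top}$; verifying this cancellation is precisely what fixes $\alpha=2/N_2$, $\beta=-2(N-2)/(NN_2)$, $\gamma=-1/[2(N-1)]$.

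\textbf{Step~3.} Consequently $\mathcal J^{(r)}(\vr)=\int_{S^2}(\vec r^{\top}M\vec r)^r\,d\vec r$, which I would evaluate using the second-, fourth- and sixth-order moments of the uniform measure on $S^2$ (equivalently spherical $2r$-designs, or Weingarten calculus as in Obs.~\ref{ob:singletcriterion}). For $r=1,2,3$ these are symmetric polynomials in $\tr M,\tr M^2,\tr M^3$; inverting the three relations recovers those power sums, hence the full spectrum $\mu_1\geq\mu_2\geq\mu_3$ of the $3\times3$ matrix $M$, and, via $\tr T=1$, also $|\vec s|^2=\tfrac12-\sum_i\mu_i$. \textbf{Step~4.} It then remains to express ``$C\not\geq 0$'' through $(\mu_i,|\vec s|^2)$. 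Since $C=2M+\vec s\vec s^{\top}$ with $\vec s\vec s^{\top}$ rank one and positive semidefinite, Weyl's interlacing gives $2\mu_3\leq\lambda_{\min}(C)\leq 2\mu_2$, so that $\mu_2<0$ already certifies entanglement while $\mu_3\geq 0$ certifies separability; evaluating $\vec r^{\top}C\vec r$ on the $\mu_3$-eigenvector of $M$ yields the further bound $\lambda_{\min}(C)\leq 2\mu_3+|\vec s|^2$, so $2\mu_3+|\vec s|^2<0$ likewise certifies entanglement. Collecting these tests into a single ``if/then'' chain is the explicit procedure claimed.

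The step I expect to be the real obstacle is closing the remaining regime $\mu_3<0\leq\mu_2$ with $2\mu_3+|\vec s|^2\geq 0$: the first three moments fix only the \emph{spectrum} of $M$, not how the Bloch vector $\vec s$ is oriented inside the eigenbasis of $M$, and for a generic orientation $\lambda_{\min}(C)$ can have either sign. The resolution must use that $\vr_{ab}$ is not an arbitrary symmetric two-qubit state but the marginal of a genuine permutationally symmetric $N$-qubit state; feeding in positivity of $\vr_{ab}$ together with this symmetric extendibility should rule out the ``misaligned'' configurations and force $\lambda_{\min}(C)\geq 0$ throughout that window, thereby making the decision sharp. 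Establishing this — that the spectral data of $M$ together with the physical constraints already determine the sign of $\lambda_{\min}(C)$ — is the technical heart of the argument, and is also where the particular choice of $(\alpha,\beta,\gamma)$, which renders $f_U$ exactly the quadratic form of $\tfrac12 T-\vec s\vec s^{\top}$ with vanishing offset, is exploited.
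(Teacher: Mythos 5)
Your Steps 1--3 track the paper's own route: reduce to the two-qubit marginal $\vr_{ab}$, recall that for symmetric states entanglement of $\vr_{ab}$ is equivalent to $C=T-\vec s\vec s^{\top}\ngeq 0$, write $f_U$ as a quadratic form $\vec r^{\top}M\vec r$ in the Haar-random direction $\vec r$, and recover the power sums $\tr(M^r)$ for $r=1,2,3$ from the spherical moment integrals. The paper does exactly this, with the explicit inversions $\mathcal C^{(1)}=\tfrac13\tr C$, $\mathcal C^{(2)}=\tfrac1{15}[\tr(C)^2+2\tr(C^2)]$, $\mathcal C^{(3)}=\tfrac1{105}[\tr(C)^3+6\tr(C)\tr(C^2)+8\tr(C^3)]$, after which the characteristic polynomial of $C$ decides positivity.

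The genuine gap is the one you flag yourself in Step 4, and your proposed repair is neither the paper's argument nor, I believe, viable. In the paper the matrix whose quadratic form $f_U$ computes is $C$ \emph{itself}: the proof asserts $f_U(\vr)=\mathrm{Cov}_U=\tr[\vr_{ab}U^{\otimes2}\sigma_z\otimes\sigma_z(U^\dagger)^{\otimes2}]-\tr[\vr_aU\sigma_zU^\dagger]\tr[\vr_bU\sigma_zU^\dagger]=\vec r^{\top}C\vec r$, so the three moments give $\tr C,\tr(C^2),\tr(C^3)$ and hence the full spectrum of $C$ --- no Weyl interlacing, no separate reconstruction of $|\vec s|^2$, and no appeal to symmetric extendibility is needed. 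Your $M=\tfrac12T-\vec s\vec s^{\top}$ arises because you expand $\ex{J_z^2}_U=\tfrac N4+\tfrac{N(N-1)}4\,\vec r^{\top}T\vec r$ (which is the correct combinatorics; check $N=2$, $\ket{00}$), whereas the paper's appendix uses the coefficient $\tfrac{N(N-1)}2$, with which the stated $(\alpha,\beta,\gamma)$ do yield $f_U=\vec r^{\top}C\vec r$. There is therefore a factor-of-two mismatch between the advertised constants and the correct expansion; the fix is not your extendibility argument but a renormalization of the parameters so that $f_U$ equals the quadratic form of $C$ exactly (e.g.\ $\alpha=4/N_2$, $\gamma=-1/(N-1)$, with $\beta$ retuned so that the coefficient of $(\vec r\cdot\vec s)^2$ is $-1$), after which your Step 4 evaporates. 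As written, your proof does not close: the spectrum of $\tfrac12T-\vec s\vec s^{\top}$ together with $|\vec s|^2$ genuinely underdetermines the sign of $\lambda_{\min}(C)$ in the window you describe, because the orientation of $\vec s$ in the eigenbasis of $M$ is not fixed by the moments, and the hoped-for rescue from $N$-extendibility of $\vr_{ab}$ is asserted rather than established.
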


\begin{proof}
In the following, we will first describe the logic of
how to prove Observation~\ref{ob:neccesuffi} in the
main text and later
explain each line step-by-step
\begin{subequations}
\begin{align}
    \vr_{\text{PS}} 
    \in \mathcal{H}_2^{\otimes N}
    \ \text{is}
    \ \text{spin squeezed}
    &\iff
    \vr_{ab} \in \mathcal{H}_2^{\otimes 2}
    \ \text{is}
    \ \text{entangled}
    \\
    &\iff
    \vr_{ab} \not\in
    \text{PPT}\\
    &\iff
    M \ngeq 0\\
    &\iff
    C \ngeq 0\\
    &\iff
    \text{obtained}
    \ \text{from}
    \
    \tr(C^r)
    \     \text{for}
    \
    r=1,2,3\\
    &\iff
    \text{obtained}
    \ \text{from}
    \
    \mathcal{C}^{(r)}(\vr)
    \     \text{for}
    \
    r=1,2,3\\
    &\iff
    \text{obtained}
    \ \text{from}
    \
    \mathcal{J}^{(r)}(\vr)
    \     \text{for}
    \
    r=1,2,3.
\end{align}
\end{subequations}
In the first line, we denote an $N$-qubit permutationally symmetric
state as $\vr_{\text{PS}}$ and recall again that it possesses bipartite entanglement or often spin squeezing if and only if any two-qubit reduced state $\vr_{ab} = \tr_{(a,b)^c} (\vr_{\text{PS}})$ is
entangled for $a,b=1,2,\ldots, N$, where $X^c$ is the complement of
a set $X$.
This has been already discussed in Refs.~\cite{wang2003spin, korbicz2005spin, korbicz2006generalized}.
In the second line, we also recall that any two-qubit state is entangled if and only if it has a negative eigenvalue under partial transposition, that is, it violates
the so-called PPT criterion~\cite{peres1996separability, horodecki2001separability}.

In the third line, we first recall that any two-qubit state $\vr_{ab}$ can
be written as
\begin{align}
    \vr_{ab} = \frac{1}{4}\sum_{i,j=0}^{3} m_{ij}\sigma_i \otimes \sigma_j.
\end{align}
Here we note that a two-qubit state $\vr_{ab}$ is permutationally symmetric
and separable (that is, PPT) if and only if it holds that $M \geq 0$, where $M = (m_{ij})$ for $i,j=0,1,2,3$.
In the fourth line, this separability condition is equivalent to $C \geq 0$
for a permutationally symmetric $\vr_{ab}$.
Here the $3 \times 3$ matrix $C = (C_{ij})$ is the Schur complement of the $4 \times 4$ matrix $M$,
which is given by $C_{ij} = m_{ij} - m_{i0} m_{0j}$ for $i,j=1,2,3$ since $m_{00}=1$.
For details, see Refs.~\cite{devi2007constraints, toth2009entanglement, bohnet2016partial}.

In the fifth line, we first discuss the explicit form of the covariance matrix $C$
\begin{align} \label{eq:covappA}
    C_{ij}=
    \tr[\vr_{ab} \sigma_i \otimes \sigma_j]
    -\tr[\vr_{a} \sigma_i]
    \tr[\vr_{b} \sigma_j]
    = t_{ij} - a_i a_j,
\end{align}
where
$m_{ij} = t_{ij} = t_{ji}$ and
$m_{i0} = m_{0i} = a_i$
since $\vr_{ab}$ is permutationally symmetric.
Then, the covariance matrix $C = T  - \vec{a} \vec{a}^\top$ is symmetric $C = C^\top$,
where
$T=(t_{ij}) = T^\top$ with the constraint $\tr[T]=\sum_i t_{ii} = 1$
and $\Vec{a} = (a_x, a_y, a_z)$.
To proceed, let us remark that the matrix $C$ can be diagonalized by a collective
local unitary transformation $V \otimes V$, leads to that
$OCO^\top = \text{diag}(c_1, c_2, c_3)$ with a rotation matrix $O \in SO(3)$.
In fact, the eigenvalues $c_1, c_2, c_3$ can be found by computing the roots of
the characteristic polynomial
\begin{align}
    p_C(\lambda) = \lambda^3 - \tr(C)\lambda^2 + \frac{1}{2}\left[
    \tr(C)^2 -\tr(C^2)    \right] \lambda
    -\det{(C)},
\end{align}
where $\tr(C^r) = \sum_{i=1,2,3} c_i^r$ and the $\det{(C)}$ can be written as
\begin{align}
    \det{(C)} = \frac{1}{6}\left[
    \tr(C)^3 -3\tr(C)\tr(C^2) + 2\tr(C^3)
    \right].
\end{align}
That is, knowing the $\tr[C^r]$ for $r=1,2,3$ can enable us to
access its eigenvalues and therefore decide whether the matrix $C$ is positive
or negative.

In the sixth and seventh lines, it is sufficient to show that $\tr[C^r]$ for $r=1,2,3$ can be
obtained from the moments $\mathcal{J}^{(r)}(\vr)$ in the collective randomized
measurements.
For the choice 
$\alpha = 2/N_2$,
$\beta = -2(N-2)/(NN_2)$,
$\gamma = -1/[2(N-1)]$,
and $N_2 = N(N-1)$,
we immediately find that the moments $\mathcal{J}^{(r)}(\vr)$ can be equal to the moments $\mathcal{C}^{(r)} (\vr_{ab})$ of the random covariance matrix
\begin{subequations}
\begin{align}
        \mathcal{J}^{(r)}(\vr)
        &=
        \mathcal{C}^{(r)} (\vr_{ab})
        \equiv \int dU \, [\text{Cov}_U]^r,\\
        \text{Cov}_U
        &=\tr[\vr_{ab} U^{\otimes 2}\sigma_z \otimes \sigma_z
        (U^\dagger)^{\otimes 2}]
        -\tr[\vr_{a} U\sigma_z U^\dagger]
        \tr[\vr_{b} U\sigma_z U^\dagger].
\end{align}
\end{subequations}
This results from the fact that
$\ex{J_z}_U = \frac{N}{2} \tr[\vr_{a} U\sigma_z U^\dagger]$
and
$\ex{J_z^2}_U
= \frac{N}{4}
+ \frac{N(N-1)}{2}
\tr[\vr_{ab} U^{\otimes 2}\sigma_z \otimes \sigma_z
(U^\dagger)^{\otimes 2}]$.
In the following, we will evaluate the moments $\mathcal{C}^{(r)} (\vr_{ab})$ and show that they are associated with $\tr[C^r]$.

Let us begin by rewriting the moments $\mathcal{C}^{(r)}(\vr_{ab})$ as
\begin{align} \label{momentexpr}
    \mathcal{C}^{(r)}(\vr_{ab})
    =  \frac{1}{4^r}
    \int dU \, \left[
    \sum_{i,j=x,y,z} C_{ij} \,
    \mathcal{O}_U^{(i)}\mathcal{O}_U^{(j)}
    \right]^r,
\end{align}
where we define that
$\mathcal{O}_U^{(i)} = \tr[\sigma_i U\sigma_z U^\dagger]$.
For convenience, we denote that 
\begin{subequations}
\begin{align}
    &\mathcal{I}^{(1)}(i,j)
    = \int dU \, \mathcal{O}_U^{(i)}\mathcal{O}_U^{(j)},\\
    &\mathcal{I}^{(2)}(i,j,k,l)
    = \int dU \,
    \mathcal{O}_U^{(i)}\mathcal{O}_U^{(j)}
    \mathcal{O}_U^{(k)}\mathcal{O}_U^{(l)},\\
    &\mathcal{I}^{(3)}(i,j,k,l,m,n)
    = \int dU \,
    \mathcal{O}_U^{(i)}\mathcal{O}_U^{(j)}
    \mathcal{O}_U^{(k)}\mathcal{O}_U^{(l)}
    \mathcal{O}_U^{(m)}\mathcal{O}_U^{(n)}.
\end{align}
\end{subequations}
These integrals are known to be simplified as follows
\begin{subequations}
    \begin{align}
    &\mathcal{I}^{(1)}(i,j) \label{eq:haarijappa}
    = \frac{4}{3} \delta_{ij},\\
    &\mathcal{I}^{(2)}(i,j,k,l)
    =\frac{16}{15}
    \left(
    \delta_{ij}\delta_{kl} 
    + \delta_{ik}\delta_{jl}
    + \delta_{il}\delta_{jk}
    \right),\\
    &\mathcal{I}^{(3)}(i,j,k,l,m,n) \nonumber
    =\frac{64}{105}\Big\{
     \delta_{ij} [\delta_{kl}\delta_{mn} +\delta_{km}\delta_{ln}+\delta_{kn}\delta_{lm}]
    +\delta_{ik} [\delta_{jl}\delta_{mn} +\delta_{jm}\delta_{ln}+\delta_{jn}\delta_{lm}]\\ \nonumber
    & \qquad \qquad \qquad \qquad \qquad \ 
    +\delta_{il} [\delta_{jk}\delta_{mn} +\delta_{jm}\delta_{kn}+\delta_{jn}\delta_{km}]
    +\delta_{im} [\delta_{jk}\delta_{ln} +\delta_{jl}\delta_{kn}+\delta_{jn}\delta_{kl}]\\
    &\qquad \qquad \qquad \qquad \qquad \ 
    +\delta_{in} [\delta_{jk}\delta_{lm} +\delta_{jl}\delta_{km}+\delta_{jm}\delta_{kl}]
    \Big\},
\end{align}
\end{subequations}
where we use the formulas in Ref.~\cite{wyderka2023complete}.
Substituting the above results into the moments in Eq.~(\ref{momentexpr}) for different $r=1,2,3$, we can straightforwardly obtain the following expressions
\begin{subequations}
    \begin{align}
    \mathcal{C}^{(1)}(\vr_{ab})
    &= \frac{1}{3}\tr(C),\\
    \mathcal{C}^{(2)}(\vr_{ab})
    &= \frac{1}{15}\left[
    \tr(C)^2 + \tr(CC^\top) + \tr(C^2)
    \right],\\
    \mathcal{C}^{(3)}(\vr_{ab})
    &= \frac{1}{105}\left\{
    \tr(C)\left[\tr(C)^2+ 3\tr(C^2) + 3\tr(CC^\top)
    \right]
    + 4\tr(C^2C^\top) + 2\tr(CC^\top C^\top)
    + 2\tr(C^3)
    \right\}.
\end{align}
\end{subequations}
Furthermore, using the symmetric condition $C = C^\top$, we can finally arrive at
\begin{subequations}
    \begin{align}
    \mathcal{C}^{(1)}(\vr_{ab})
    &= \frac{1}{3}\tr(C),\\
    \mathcal{C}^{(2)}(\vr_{ab})
    &= \frac{1}{15}\left[
    \tr(C)^2 + 2 \tr(C^2)
    \right],\\
    \mathcal{C}^{(3)}(\vr_{ab})
    &= \frac{1}{105}\left\{
    \tr(C)\left[\tr(C)^2+ 6\tr(C^2) \right]
    + 8\tr(C^3)
    \right\}.
\end{align}
\end{subequations}
The moments $\mathcal{C}^{(r)}(\vr_{ab})$,
equivalently $\mathcal{J}^{(r)}(\vr)$, are directly connected to $\tr[C^r]$.
Hence we complete the proof.   
\end{proof}

\section{Derivation of the result in Observation~\ref{ob:singletcriterion}}\label{ap:singletcriterion}
\begin{observation}
For an $N$-qubit state $\vr$, the first moment $\mathcal{J}^{(1)}$ with $(\alpha, \beta, \gamma) = (3,0,0)$ is given by
    \begin{align}\label{eq:sumofvarianceB}
    \mathcal{J}^{(1)}(\vr)
    = \sum_{l=x,y,z} \va{J_l}.
    \end{align}
Any $N$-qubit fully separable state obeys
\begin{align}
    \mathcal{J}^{(1)}(\vr)
    \geq \frac{N}{2}.
\end{align}
Then violation implies the presence of multipartite entanglement.
\end{observation}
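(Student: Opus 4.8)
The plan is to prove the two assertions separately, since the lower bound $\mathcal{J}^{(1)}(\vr)\ge N/2$ for fully separable states is the known spin-squeezing inequality of Ref.~\cite{toth2004entanglement} rewritten in the present notation, so the only genuinely new step is evaluating the Haar average in Eq.~(\ref{eq:sumofvarianceB}). For that, I would first recast the integrand geometrically. With $(\alpha,\beta,\gamma)=(3,0,0)$ one has $f_U(\vr)=3\,\va{J_z}_U$, and the conjugation $U^{\otimes N}J_z(U^\dagger)^{\otimes N}$ merely rotates the collective spin: writing $R\in SO(3)$ for the rotation associated with $U\in SU(2)$ through $U\sigma_l U^\dagger=\sum_m R_{ml}\sigma_m$, one gets $U^{\otimes N}J_z(U^\dagger)^{\otimes N}=\vec{n}\cdot\vec{J}$ with $\vec{n}=(R_{xz},R_{yz},R_{zz})^\top$ a unit vector. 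As $U$ runs over the Haar measure on $SU(2)$, the vector $\vec{n}$ is distributed uniformly over the sphere $S^2$, because $SU(2)\to SO(3)$ is a surjective homomorphism carrying Haar to Haar and a fixed column of a Haar-random rotation is uniform on $S^2$. Introducing the collective covariance matrix $(C_J)_{lm}=\tfrac{1}{2}\ex{J_lJ_m+J_mJ_l}-\ex{J_l}\ex{J_m}$, this gives $f_U(\vr)=3\,\vec{n}^{\top}C_J\vec{n}$, the antisymmetric part of $\ex{J_lJ_m}$ dropping out of the quadratic form.

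The second step is to average this quadratic form over the sphere. One only needs
\begin{equation}
    \int_{S^2}d\vec{n}\;n_l n_m=\tfrac{1}{3}\delta_{lm},
\end{equation}
which is exactly the spherical two-design property invoked elsewhere in the paper; equivalently it is the $r=1$ instance of the Weingarten integral $\mathcal{I}^{(1)}$ computed in Appendix~\ref{ap:neccesuffi}, now read in the spin-$1$ (adjoint) representation of the collective unitary. Substituting, $\mathcal{J}^{(1)}(\vr)=3\cdot\tfrac{1}{3}\,\tr(C_J)=\sum_{l=x,y,z}\big(\ex{J_l^2}-\ex{J_l}^2\big)=\sum_{l=x,y,z}\va{J_l}$, which is Eq.~(\ref{eq:sumofvarianceB}).

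For the separability bound I would recall the short argument behind Ref.~\cite{toth2004entanglement}: on a pure product state $\bigotimes_i\ket{\psi_i}$ the cross-site correlations of $J_l=\tfrac{1}{2}\sum_i\sigma_l^{(i)}$ factorize, so $\va{J_l}=\tfrac{1}{4}\sum_i\big(1-\ex{\sigma_l^{(i)}}^2\big)$ and hence $\sum_l\va{J_l}=\tfrac{1}{4}\sum_i\big(3-|\vec{r}_i|^2\big)\ge N/2$, using $|\vec{r}_i|\le1$ for each single-qubit Bloch vector; since $\vr\mapsto\va{J_l}_\vr$ is concave in $\vr$, averaging over any fully separable decomposition as in Eq.~(\ref{eq-entanglement}) can only increase $\sum_l\va{J_l}$, so the bound survives for mixed $\vr_{\text{fs}}$. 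The only delicate aspect of the whole proof is bookkeeping: a brute-force Weingarten computation on $\mathcal{H}_2^{\otimes 2N}$ generates unwieldy sums over permutations, and the point I would stress is that because $J_z$ sits in the adjoint representation of the collective $SU(2)$, the entire $2N$-qubit integral collapses to the single three-dimensional spherical average above.
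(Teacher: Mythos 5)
Your proposal is correct, and both of its halves check out: the identity $U^{\otimes N}J_z(U^\dagger)^{\otimes N}=\vec{n}\cdot\vec{J}$ with $\vec{n}$ uniform on $S^2$, the reduction $\va{J_z}_U=\vec{n}^\top C_J\vec{n}$, and the spherical average $\int d\vec{n}\,n_ln_m=\tfrac{1}{3}\delta_{lm}$ together give $3\cdot\tfrac13\tr(C_J)=\sum_l\va{J_l}$, and your product-state-plus-concavity argument for the bound $N/2$ is the standard one. The route differs from the paper's in a worthwhile way. The paper first splits $\va{J_z}_U$ into sums over one- and two-qubit reduced states $\vr_i,\vr_{ij}$, expands the two-body terms in the correlation tensor $t_{kl}^{(i,j)}$, and then applies the second-moment formula $\mathcal{I}^{(1)}(i,j)=\int dU\,\tr[\sigma_iU\sigma_zU^\dagger]\tr[\sigma_jU\sigma_zU^\dagger]=\tfrac43\delta_{ij}$ term by term before reassembling $\sum_l\ex{J_l^2}$ and $\sum_l\ex{J_l}^2$; this is the same computation packaged through Weingarten-style bookkeeping, and it has the advantage of matching the machinery reused for the higher moments in Observation~1. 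Your version works entirely at the level of collective operators in the adjoint representation, never touching reduced states, which makes the spherical-two-design structure of the statement transparent and collapses the $2N$-qubit integral to a single quadratic average on $S^2$ — noting that $\mathcal{O}_U^{(i)}=\tr[\sigma_iU\sigma_zU^\dagger]=2n_i$ shows the two computations rest on literally the same integral. You also go beyond the paper in one respect: the separability bound is only cited there (to T\'oth, 2004), whereas you supply its short proof; that proof is fine, though you should state explicitly that $\va{J_l}$ is concave because $\ex{J_l^2}$ is linear in $\vr$ while $-\ex{J_l}^2$ is concave, which is the one-line justification the argument needs.
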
 

\begin{proof}
Here we give the derivation of Eq.~(\ref{eq:sumofvarianceB}).
Let us begin by writing that
$\mathcal{J}^{(1)} (\vr) =3 \int dU \, \va{J_z}_{U}$
and 
\begin{align} \nonumber
    \va{J_z}_{U}
    &= \ex{
    U^{\otimes N}J_z^2 (U^{\dagger})^{\otimes N}
    }_\vr
    -\ex{
    U^{\otimes N}J_z (U^{\dagger})^{\otimes N}
    }_{\vr}^2\\ 
    \nonumber
    &=\frac{1}{4}\sum_{i,j=1}^N
    \ex{
    U^{\otimes N} \sigma_z^{(i)} \otimes \sigma_z^{(j)} (U^{\dagger})^{\otimes N}
    }_\vr
    -\frac{1}{4}\sum_{i,j=1}^N
    \ex{
    U^{\otimes N}\sigma_z^{(i)} (U^{\dagger})^{\otimes N}
    }_{\vr}
    \ex{
    U^{\otimes N}\sigma_z^{(j)} (U^{\dagger})^{\otimes N}
    }_{\vr}\\
    &=\frac{1}{4}
    \left\{N+
    \sum_{i \neq j}^N 
    \tr \left[
    U^{\otimes 2} \sigma_z^{(i)} \otimes \sigma_z^{(j)} (U^{\dagger})^{\otimes 2}
    \vr_{ij}
    \right]
    -\sum_{i,j=1}^N
    \tr \left[
    U\sigma_z^{(i)} U^{\dagger}
    \vr_i
    \right]
    \tr \left[
    U\sigma_z^{(j)} U^{\dagger}
    \vr_j
    \right]
    \right\},
    \label{eq:vaJzUeval}
\end{align}
where $\vr_{ij}$ and $\vr_{i}$ are the two-qubit and single-qubit reduced states of $\vr$.
Let us focus on the second term in Eq.~(\ref{eq:vaJzUeval}) and take the Haar unitary average
\begin{align}\nonumber
    \sum_{i \neq j}^N \int dU \,
    \tr \left[
    U^{\otimes 2} \sigma_z^{(i)}\otimes \sigma_z^{(j)} (U^{\dagger})^{\otimes 2}
    \vr_{ij}
    \right]
    &=\frac{1}{4} \sum_{i \neq j}^N \int dU \,
    \tr \left[
    U^{\otimes 2} \sigma_z^{(i)}\otimes \sigma_z^{(j)} (U^{\dagger})^{\otimes 2}
    \sum_{k,l=x,y,z} t_{kl}^{(i,j)}
    \sigma_k^{(i)}\otimes \sigma_l^{(j)}
    \right]\\
    \nonumber
    &=\frac{1}{4} \sum_{i \neq j}^N
    \sum_{k,l=x,y,z} t_{kl}^{(i,j)}
    \int dU \,
    \tr \left[
    U\sigma_z^{(i)}U^{\dagger}
    \sigma_k^{(i)}
    \right]
    \tr \left[
    U\sigma_z^{(j)}U^{\dagger}
    \sigma_l^{(j)}
    \right]\\
    &=\frac{1}{3} \sum_{i \neq j}^N
    \sum_{l=x,y,z} t_{ll}^{(i,j)}
    =\frac{4}{3}  \sum_{l=x,y,z} \ex{J_l^2} - N,
\end{align}
where
$t_{kl}^{(i, j)} = \ex{\sigma_k^{(i)}\otimes \sigma_l^{(j)}}_{\vr_{ij}}$.
Similarly, the third term in Eq.~(\ref{eq:vaJzUeval}) can be given by
\begin{align}
    \sum_{i,j=1}^N
    \int dU \,
    \tr \left[
    U\sigma_z^{(i)} U^{\dagger}
    \vr_i
    \right]
    \tr \left[
    U\sigma_z^{(j)} U^{\dagger}
    \vr_j
    \right]
    =\frac{4}{3} \sum_{l=x,y,z} \ex{J_l}^2.
\end{align}
Summarizing these results, we can thus arrive at
\begin{align}
    \mathcal{J}^{(1)} (\vr)
    =\frac{3}{4}
    \left\{
    N+
    \frac{4}{3}  \sum_{l=x,y,z} \ex{J_l^2} - N
    + \frac{4}{3} \sum_{l=x,y,z} \ex{J_l}^2
    \right\} =  \sum_{l=x,y,z} \va{J_l}.
\end{align}
\end{proof}

\noindent
\textbf{Remark B1.}
Here we consider the generalization of Observation~\ref{ob:singletcriterion} in the main text to
high-dimensional spin systems.
For that, let us denote the $N$-qudit collective operators
$\Lambda_l = \frac{1}{d}\sum_{i=1}^N \lambda_l^{(i)}$,
with the so-called Gell-Mann matrices $\lambda_l^{(i)}$ for $l=1,2,\ldots,d^2-1$ acting on $i$-th system.
The Gell-Mann matrices are $d$-dimensional extensions of Pauli matrices satisfying the properties:
$\lambda_l^\dagger = \lambda_l,
\tr(\lambda_l) = 0,
\tr(\lambda_k \lambda_l) = d\delta_{kl}$.
For details, see~\cite{kimura2003bloch, bertlmann2008bloch,eltschka2018distribution,wei2022antilinear}.
Let us define the random expectation and its variance
\begin{align}
    \ex{\Lambda_l}_U
    = \tr[\vr U^{\otimes N}
    \Lambda_l (U^\dagger)^{\otimes N}],
    \qquad
    \va{\Lambda_l}_U
    =\ex{\Lambda_l^2}_U- \ex{\Lambda_l}_U^2,
\end{align}
which depends on the choice of collective unitaries $U^{\otimes N}$ with $U \in \mathcal{U}(d)$.
Now we introduce the average of $\va{\Lambda_l}_U$ for any $l$ over Haar random unitaries
\begin{align}
    \mathcal{D}(\vr)
    =(d^2-1)
    \int dU \, \va{\Lambda_l}_U.
\end{align}
Now we can make the following:

\vspace{1em}
\noindent
\textbf{Remark B2.}
\textit{For an $N$-qudit state $\vr$, the average can be given by
\begin{align}
    \mathcal{D}(\vr)
    = \sum_{l=1}^{d^2-1} \va{\Lambda_l}.
    \label{eq:appendixbevaluation}
\end{align}
Any $N$-qudit fully separable state obeys
\begin{align}
    \sum_{l=1}^{d^2-1} \va{\Lambda_l}
    \geq \frac{N(d-1)}{d}.
\end{align}
This violation implies the presence of multipartite entanglement.
}

This is the generalization of Observation~\ref{ob:singletcriterion}.
The fully separable bound was already discussed in Refs.~\cite{vitagliano2011spin,vitagliano2014spin}.
In the following, we give the derivation of Eq.(\ref{eq:appendixbevaluation}).

\begin{proof}
Similarly to Eq.~(\ref{eq:vaJzUeval}), the random variance $\va{\Lambda_l}_{U}$ can be written as
\begin{align}
    \va{\Lambda_l}_{U}
    \!=\! \frac{1}{d^2}\!
    \left\{\sum_{i=1}^N 
    \tr[U (\lambda_l^{(i)})^2  U^\dagger \vr_i]
    \!+\!
    \sum_{i \neq j}^N 
    \tr \left[
    U^{\otimes 2} \lambda_l^{(i)} \otimes \lambda_l^{(j)} (U^{\dagger})^{\otimes 2}
    \vr_{ij}
    \right]
    \!-\!
    \sum_{i,j=1}^N
    \tr \left[
    U\lambda_l^{(i)} U^{\dagger}
    \vr_i
    \right]
    \tr \left[
    U\lambda_l^{(j)} U^{\dagger}
    \vr_j
    \right]
    \right\},
    \label{eq:Lambdavariance}
\end{align}
where $\vr_{ij}$ and $\vr_{i}$ are the two-qudit and single-qudit reduced states of $\vr$.
To evaluate the Haar unitary integral, let us use the known formulas~\cite{cieslinski2023analysing,dankert2005efficient,zhang2014matrix,roberts2017chaos,garcia2021quantum,brandao2021models,imai2023work}
\begin{align}
\int dU\, U X U^\dagger
=\frac{\tr[X]}{d} \eins_d,
\quad
\int dU\, U^{\otimes 2} X (U^\dagger)^{\otimes 2}
=\frac{1}{d^2-1}
    \left\{
    \left[\tr(X)-\frac{\tr(\swap X)}{d}\right] \eins_d^{\otimes 2}
    +\left[\tr(\swap X)-\frac{\tr(X)}{d}
    \right]\swap
    \right\},
    \label{eq:haarintegralformula}
\end{align}
for an operator $X$.
Here $\swap$ is the SWAP (flip) operator acting on $d \times d$-dimensional systems, defined as
$\swap \ket{a}\ket{b}= \ket{b}\ket{a}$.
Thus we first obtain
\begin{align}
    \sum_{i=1}^N 
    \int dU\, \tr[U (\lambda_l^{(i)})^2  U^\dagger \vr_i]
    = \sum_{i=1}^N 
    \frac{\tr[(\lambda_l^{(i)})^2]}{d}
    \tr[\vr_i]
    = N.
\end{align}
Next, we have
\begin{align} \nonumber
    \sum_{i \neq j}^N \int dU \,
    \tr \left[
    U^{\otimes 2} \lambda_l^{(i)}\otimes \lambda_l^{(j)} (U^{\dagger})^{\otimes 2}
    \vr_{ij}
    \right]
    &=\frac{1}{d^2}
    \sum_{i \neq j}^N \int dU \,
    \tr \left[
    U^{\otimes 2} \lambda_l^{(i)}\otimes \lambda_l^{(j)} (U^{\dagger})^{\otimes 2}
    \sum_{m,n=1}^{d^2-1}
    t_{mn}^{(i,j)}
    \lambda_m^{(i)}\otimes
    \lambda_n^{(j)}
    \right]\\
    \nonumber
    &=\frac{1}{d^2}\frac{1}{d^2-1}
    \sum_{i \neq j}^N
    \sum_{m,n=1}^{d^2-1} t_{mn}^{(i,j)}
    \tr \left[
    (d\swap - \eins_d^{\otimes 2})
    \lambda_m^{(i)}\otimes
    \lambda_n^{(j)}
    \right]\\
    &=\frac{1}{d^2-1} \sum_{i \neq j}^N
    \sum_{l=1}^{d^2-1}  t_{ll}^{(i,j)}
    \nonumber
    \\
    &=\frac{d^2}{d^2-1}
    \sum_{l=1}^{d^2-1} \ex{\Lambda_l^2} - N.
\end{align}
In the first line, we denote that
$t_{mn}^{(i, j)} = \ex{\lambda_m^{(i)}\otimes \lambda_n^{(j)}}_{\vr_{ij}}$.
In the second line, we used the formula in Eq.~(\ref{eq:haarintegralformula}) and the so-called SWAP trick:
$\tr[\swap X] = \tr[\swap (X_A\otimes X_B)] = \tr[X_A X_B]$ for an operator $X=X_A \otimes X_B $.
In the final line, we used that
$\sum_{l=1}^{d^2-1} \lambda_l^2 = (d^2-1)\eins_d$, which can be derived from the facts that
$\swap = \frac{1}{d} \sum_{l=0}^{d^2-1}
\lambda_l \otimes \lambda_l$
and 
$\swap^2 = \eins_d^{\otimes 2}$.
Similarly, we obtain
\begin{align}
    \sum_{i,j=1}^N
    \int dU \,
    \tr \left[
    U\lambda_l^{(i)} U^{\dagger}
    \vr_i
    \right]
    \tr \left[
    U\lambda_l^{(j)} U^{\dagger}
    \vr_j
    \right]
    =\frac{1}{d^2-1}
    \sum_{i,j=1}^N
    \left[d\tr(\vr_i \vr_j)-1\right]
    =\frac{d^2}{d^2-1}
    \sum_{l=1}^{d^2-1} \ex{\Lambda_l}^2.
\end{align}
Summarizing these results, we can complete the proof.
\end{proof}

\section{Detailed discussion of Observation~\ref{ob:antisymmetricdet}
}\label{ap:antisymmetricdet}
\begin{observation}
The average $\mathcal{T}(\vr)$ is given by
\begin{align}
    \mathcal{T}(\vr)
    =
    \tr[\vr \mathcal{O}_{\mathcal{A}}]
    =
    \sum_{i< j< k}
    \
    \sum_{a,b,c}
    \varepsilon_{abc}
    \ex{\sigma_a^{(i)}
     \otimes 
    \sigma_b^{(j)}
     \otimes 
    \sigma_c^{(k)}}_\vr,
    \label{eq:mathcaltappC}
\end{align}
where
$\varepsilon_{abc}$ denotes the Levi-Civita symbol 
for $a,b,c=x,y,z$.
Any $N$-qubit fully separable state can obey a certain tight bound
\begin{align}
    |\mathcal{T}(\vr)| \leq p_{\text{fs}}^{(N)},
\end{align}
where $p_{\text{fs}}^{(N)}$ can be computed analytically for $N=3$ and numerically for up to 
$N \leq 7$ and is, up to numerical precision, given by 
$p_{\text{fs}}^{(N)} = {N^2 \cot \left({\pi }/{N}\right)}/{3 \sqrt{3}}$.
Then violation implies the presence of multipartite entanglement.
\end{observation}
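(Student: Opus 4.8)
The plan is to handle the two parts of the statement separately: first the evaluation of $\mathcal{T}(\vr)$, then the fully separable bound. For the first part, rather than grinding through Weingarten calculus I would exploit a symmetry shortcut: $\mathcal{O}_{\mathcal{A}}$ is already invariant under \emph{every} collective local unitary, so the Haar average is vacuous. Concretely, writing $U\sigma_a U^\dagger=\sum_b R_{ba}(U)\,\sigma_b$ with $R(U)\in SO(3)$, conjugation by $U^{\otimes N}$ acts on each building block $\mathcal{A}(\sigma_x^{(i)}\otimes\sigma_y^{(j)}\otimes\sigma_z^{(k)})=\sum_{a,b,c}\varepsilon_{abc}\,\sigma_a^{(i)}\otimes\sigma_b^{(j)}\otimes\sigma_c^{(k)}$ by contracting $\varepsilon_{abc}$ with three copies of $R(U)$; since $\sum_{a,b,c}\varepsilon_{abc}R_{a'a}R_{b'b}R_{c'c}=\varepsilon_{a'b'c'}\det R(U)=\varepsilon_{a'b'c'}$, the block — hence $\mathcal{O}_{\mathcal{A}}$ — is unchanged. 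Thus $\mathcal{T}(\vr)=\tr[\vr\,\mathcal{O}_{\mathcal{A}}]$, and expanding $\mathcal{O}_{\mathcal{A}}$ in the Pauli basis gives Eq.~\eqref{eq:mathcaltappC} directly. (One could equivalently invoke that a spherical two-design replaces the integral, consistent with the remark around Eq.~\eqref{eq:sumofvariance}.)

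For the separability bound, the first reduction I would make is that $\mathcal{T}(\vr)$ is linear in $\vr$, so $\max|\mathcal{T}(\vr)|$ over the convex set of fully separable states is attained at a pure product state $\bigotimes_i|\psi_i\rangle$. For such a state the correlators factorize, $\ex{\sigma_a^{(i)}\otimes\sigma_b^{(j)}\otimes\sigma_c^{(k)}}_\vr=r_i^a r_j^b r_k^c$ with unit Bloch vectors $\vec{r}_i$, so that
\begin{equation}
\mathcal{T}(\vr)=\sum_{i<j<k}\sum_{a,b,c}\varepsilon_{abc}\,r_i^a r_j^b r_k^c=\sum_{i<j<k}\vec{r}_i\cdot(\vec{r}_j\times\vec{r}_k).
\end{equation}
Hence $p_{\text{fs}}^{(N)}$ equals the maximum of a sum of scalar triple products of $N$ unit vectors in $\mathbb{R}^3$, a purely geometric optimization.

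Next I would establish achievability (the lower bound on $p_{\text{fs}}^{(N)}$) by evaluating this sum on the ``umbrella'' configuration $\vec{r}_k=\sin\theta\,(\cos\tfrac{2\pi k}{N},\sin\tfrac{2\pi k}{N},0)+\cos\theta\,(0,0,1)$. A short computation gives $\vec{r}_i\cdot(\vec{r}_j\times\vec{r}_k)=\sin^2\theta\cos\theta\,[\sin(\alpha_j-\alpha_i)+\sin(\alpha_k-\alpha_j)+\sin(\alpha_i-\alpha_k)]$ with $\alpha_m=2\pi m/N$, and summing over all triples (using the elementary identity $\sum_{d=1}^{N-1}d\sin(2\pi d/N)=-\tfrac{N}{2}\cot(\pi/N)$, obtained by differentiating a geometric sum) yields
\begin{equation}
\mathcal{T}(\vr)=\tfrac{N^2}{2}\cot(\pi/N)\,\sin^2\theta\cos\theta.
\end{equation}
Optimizing over $\theta$ gives $\cos^2\theta=\tfrac13$, so $\sin^2\theta\cos\theta=\tfrac{2}{3\sqrt3}$ and $\mathcal{T}(\vr)=N^2\cot(\pi/N)/(3\sqrt3)$; hence $p_{\text{fs}}^{(N)}\ge N^2\cot(\pi/N)/(3\sqrt3)$.

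The harder direction is the matching upper bound. For $N=3$ the triple-product sum is a single determinant $\det[\vec{r}_1\,\vec{r}_2\,\vec{r}_3]$, so Hadamard's inequality gives $|\mathcal{T}(\vr)|\le 1=p_{\text{fs}}^{(3)}$, with equality exactly for an orthonormal frame — which is precisely what the umbrella configuration degenerates to at $N=3$; this settles $N=3$ analytically. For $4\le N\le 7$ I would verify the upper bound by numerical maximization over $N$ unit vectors, finding the umbrella as the global optimum up to numerical precision, which gives the stated $p_{\text{fs}}^{(N)}=N^2\cot(\pi/N)/(3\sqrt3)$. I expect the genuine obstacle to be a closed-form proof that the umbrella is optimal for \emph{all} $N$: it is straightforward to check that it is a critical point of the constrained problem, but the objective is non-convex with a large symmetry group, and analytically ruling out the remaining critical configurations is what I would leave open beyond $N=3$.
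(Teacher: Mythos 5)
Your proposal is correct. For the evaluation of $\mathcal{T}(\vr)$ you take a genuinely different route from the paper: the paper computes the Haar integral term by term using the formula $\int dU\,\tr[\sigma_a U\sigma_x U^\dagger]\tr[\sigma_b U\sigma_y U^\dagger]\tr[\sigma_c U\sigma_z U^\dagger]=\tfrac{4}{3}\varepsilon_{abc}$ from Ref.~\cite{wyderka2023complete}, whereas you observe that $\sum_{a,b,c}\varepsilon_{abc}R_{a'a}(U)R_{b'b}(U)R_{c'c}(U)=\det R(U)\,\varepsilon_{a'b'c'}=\varepsilon_{a'b'c'}$ makes $\mathcal{O}_{\mathcal{A}}$ itself invariant under every $U^{\otimes N}$, so the integrand is constant and the average is trivially $\tr[\vr\,\mathcal{O}_{\mathcal{A}}]$. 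This is cleaner and explains structurally why no averaging is needed; the only imprecision is your parenthetical about a spherical two-design (a cubic polynomial in $R$ would nominally require a 3-design, but since the integrand is constant the point is moot). For the separability bound your argument coincides with the paper's: reduction to pure product states (the paper invokes convexity, you linearity --- same conclusion), identification of $\mathcal{T}$ with a sum of scalar triple products of unit Bloch vectors, the umbrella configuration with $\cos\theta=1/\sqrt{3}$ (the paper's $\theta_i=2\tan^{-1}\sqrt{2-\sqrt{3}}$ is the same angle in a rotated parameterization), numerical verification of optimality for $4\le N\le 7$, and an elementary analytic proof for $N=3$ --- your Hadamard-inequality bound on $|\det[\vec{r}_1\,\vec{r}_2\,\vec{r}_3]|$ versus the paper's Cauchy--Schwarz Lemma, both giving $|\mathcal{T}|\le 1$. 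You actually go a bit further than the paper on achievability by evaluating the sum over triples in closed form, $\sum_{i<j<k}\vec{r}_i\cdot(\vec{r}_j\times\vec{r}_k)=\tfrac{N^2}{2}\cot(\pi/N)\sin^2\theta\cos\theta$ (which checks out, including the identity $\sum_{d=1}^{N-1}d\sin(2\pi d/N)=-\tfrac{N}{2}\cot(\pi/N)$), where the paper only reports the optimized value numerically; and you correctly identify that optimality of the umbrella for general $N$ remains open, exactly as the paper concedes by stating the bound ``up to numerical precision.''
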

\begin{proof}
Here we give the derivation of Eq.~(\ref{eq:mathcaltappC}).
Let us begin by recalling
\begin{align}
    \mathcal{T}(\vr)
    = \int dU \,
    \tr\left[
    \vr U^{\otimes N}
    \mathcal{O}_{\mathcal{A}}
    (U^\dagger)^{\otimes N}
    \right],
    \qquad
    \mathcal{O}_{\mathcal{A}}
    = 
    \sum_{i < j < k}
    \mathcal{A}
    \left(
    \sigma_x^{(i)} \otimes \sigma_y^{(j)} \otimes \sigma_z^{(k)}
    \right),
\end{align}
where $\mathcal{A}$ represents a linear mapping that can make the antisymmetrization (or alternatization) by summing over even permutations and subtracting the sum over odd permutations.
More precisely, the observable can be rewritten as 
\begin{align}
    \mathcal{O}_{\mathcal{A}}
    = 
    \sum_{i < j < k}\
    \sum_{l,m,n=x,y,z}
    \varepsilon_{lmn}
    \sigma_l^{(i)} \otimes \sigma_m^{(j)} \otimes \sigma_n^{(k)}.
\end{align}
For instance, in the three-qubit system $ABC$, it is given by
\begin{align}\nonumber
    \mathcal{O}_{\mathcal{A}}
    &= \sigma_x^{(A)} \otimes \sigma_y^{(B)} \otimes \sigma_z^{(C)}
    + \sigma_y^{(A)} \otimes \sigma_z^{(B)} \otimes \sigma_x^{(C)}
    + \sigma_z^{(A)} \otimes \sigma_x^{(B)} \otimes \sigma_y^{(C)}\\
    \quad
    &-\sigma_x^{(A)} \otimes \sigma_z^{(B)} \otimes \sigma_y^{(C)}
    - \sigma_y^{(A)} \otimes \sigma_x^{(B)} \otimes \sigma_z^{(C)}
    - \sigma_z^{(A)} \otimes \sigma_y^{(B)} \otimes \sigma_x^{(C)}.
\end{align}
Then we have
\begin{align}\nonumber
        \mathcal{T}(\vr)
        &=
        \sum_{i < j < k}\
        \sum_{l,m,n=x,y,z}
        \varepsilon_{lmn}
        \left\{\int dU\,
        \tr\left[
        \vr_{ijk} U^{\otimes 3}
        \sigma_l^{(i)} \otimes \sigma_m^{(j)} \otimes \sigma_n^{(k)}
        (U^\dagger)^{\otimes 3}
        \right]
        \right\}\\
        &=\frac{1}{2^3}
        \sum_{i < j < k}\
        \sum_{l,m,n=x,y,z}
        \sum_{a,b,c=x,y,z}
        \varepsilon_{lmn}
        \xi_{abc}^{(i,j,k)}
        \int dU\,
        \tr[\sigma_a^{(i)} U \sigma_l^{(i)} U^\dagger]
        \tr[\sigma_b^{(j)} U \sigma_m^{(j)} U^\dagger]
        \tr[\sigma_c^{(k)} U \sigma_n^{(k)} U^\dagger],
\end{align}
where
$\vr_{ijk}$ is the three-qubit reduced state
of $\vr$ for $i,j,k=1,2,\ldots,N$
with the three-body correlation
$\xi_{abc}^{(i,j,k)} =
\tr[\vr_{ijk}\sigma_a^{(i)}\sigma_b^{(j)}\sigma_c^{(k)}]$
for $a,b,c=x,y,z$.

To evaluate the Haar unitary integral, we use the formula
\begin{align}
        \int dU\,
        \tr[\sigma_a U \sigma_x U^\dagger]
        \tr[\sigma_b U \sigma_y U^\dagger]
        \tr[\sigma_c U \sigma_z U^\dagger]
        = \frac{4}{3} \varepsilon_{abc},
\end{align}
which has been derived in Ref.~\cite{wyderka2023complete}.
Using this formula leads to
\begin{align}
    \mathcal{T}(\vr)
    =\frac{1}{2^3}\frac{4}{3}
        \sum_{i < j < k}\
        \sum_{l,m,n=x,y,z}
        \sum_{a,b,c=x,y,z}
        \varepsilon_{lmn}
        \xi_{abc}^{(i,j,k)}
        \varepsilon_{abc}
    =   \sum_{i < j < k}\
        \sum_{a,b,c=x,y,z}
        \xi_{abc}^{(i,j,k)}
        \varepsilon_{abc}.
\end{align}
Hence we can complete the proof.
\end{proof}


\noindent
\textbf{Remark C1.}
Here we explain how to derive the bound $p_{\text{fs}}^{(N)}$.
First, we note that the average $|\mathcal{T}(\vr)|$ is a convex function
of a quantum state.
Then it is enough to maximize the average for all $N$-qubit pure fully separable states:
$\ket{\Phi_{\text{fs}}}=\bigotimes_{i=1}^N \ket{\chi_i}$.
Each of single-qubit states $\ket{\chi_i}$ can be mapped
into points on the surface in the single-qubit Bloch sphere, which can be parameterized as
$\ex{\sigma_x}_{\chi_i}
= \cos (\theta_i)$,
$\ex{\sigma_y}_{\chi_i}
= \sin (\theta_i ) \cos(\phi_i)$,
and 
$\ex{\sigma_z}_{\chi_i}
=\sin (\theta_i) \sin (\phi_i)$
for $\chi_i = \ket{\chi_i}\! \bra{\chi_i}$.
Substituting these expressions into $|\mathcal{T}(\vr)|$
and performing its maximization over the parameters, we
can find the bound $p_{\text{fs}}^{(N)}$.
From numerical research up to $N\leq 7$, we collect
evidence that there may exist the \textit{tight} bound
\begin{align}
    p_{\text{fs}}^{(N)}
    =
    \frac{N^2 \cot \left(\frac{\pi }{N}\right)}{3 \sqrt{3}},
    \label{eq:criticalpointfullsep}
\end{align}
which may be obtained by
$\theta_i =2 \tan ^{-1}(\sqrt{2-\sqrt{3}})$
and 
$\phi_i= {2 \pi  i}/{N}$ for $i=1,2,\ldots, N$.
In Fig.~\ref{Fig3:Blochob3}, we illustrate the geometrical expressions of the points $\ket{\chi_i}$ on the surface in the single-qubit Bloch sphere
for $N=6, 20, 100$.
Note that we will give the analytical proof of the case with $N=3$ at
\textbf{Corollary} in the end of Appendix~\ref{ap:antisymmetricdet}.

\begin{figure}[t]
    \centering
    \includegraphics[width=1.0\columnwidth]{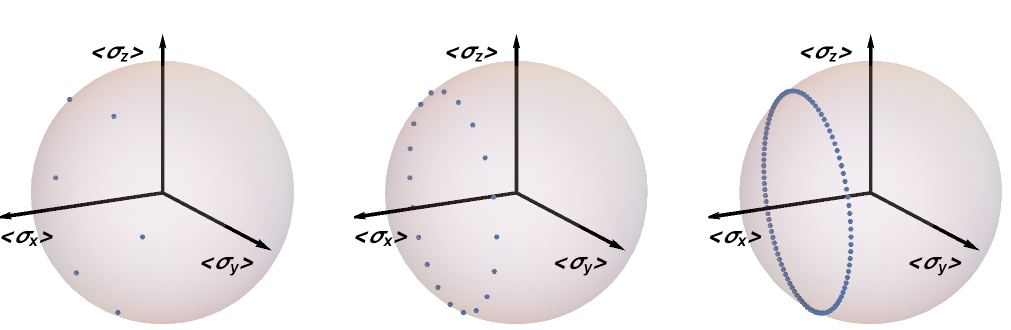}
    \caption{Geometry of $N$ single-qubit states $\ket{\chi_i}$ represented as (Blue) points on the surface in the single-qubit Bloch sphere, for $i=1,2,\ldots N$ and $N=6, 20, 100$.
    }
    \label{Fig3:Blochob3}
\end{figure}

\vspace{1em}
\noindent
\textbf{Remark C2.}
In Fig.~\ref{Fig4:casethreequbit}, we illustrate the criterion of Observation~\ref{ob:antisymmetricdet} for the state $\vr_{x,y}$ in Eq.~(\ref{eq:mixedteststate}) in the main text
for $N=4,5,6$ on the $x-y$ plane.

\begin{figure}[t]
    \centering
    \includegraphics[width=1.0\columnwidth]{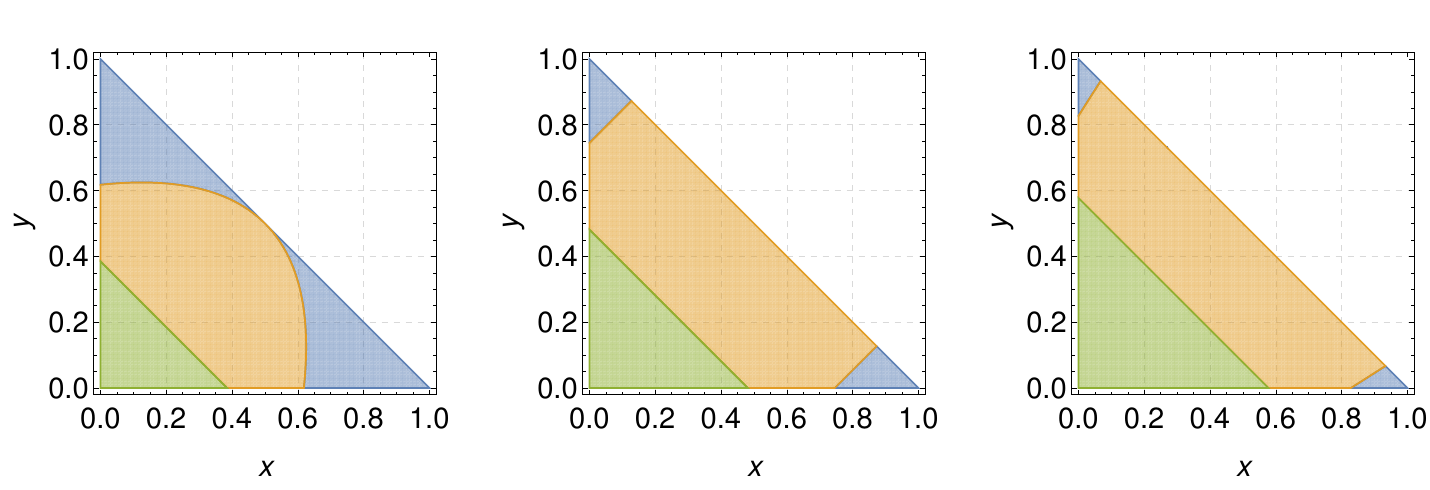}
    \caption{Entanglement criteria for the mixed state
    in Eq.~(\ref{eq:mixedteststate}) in the main text for $N=4,5,6$ in the $x-y$ plane.    
    The fully separable states are contained in Green area,
    which obeys all the optimal spin-squeezing
    inequalities (OSSIs) previously known with optimal measurement directions~\cite{toth2007optimal, toth2009spin} and also our criterion in Obs.~\ref{ob:antisymmetricdet} in the main text.
    Blue area corresponds to the spin-squeezed entangled states that can be detected by all OSSIs and Obs.~\ref{ob:antisymmetricdet}.
    Yellow area corresponds to the entangled states that cannot be detected by all OSSIs but can be detected by Obs.~\ref{ob:antisymmetricdet},
    thus marking the improvement of
    this paper compared with previous results.}
    \label{Fig4:casethreequbit}
\end{figure}

\vspace{1em}
\noindent
\textbf{Remark C3.}
Let us generalize Observation~\ref{ob:antisymmetricdet} in the main text by focusing only on three-particle systems.
We begin by denoting three-particle $d$-dimensional (three-qudit) operator as
\begin{align}
    W_S = \sum_{i,j,k}
    w_{ijk}
    s_{i} \otimes s_{j} \otimes s_{k},
\end{align}
for some given three-fold tensor $w_{ijk}$
and matrices $s_{i} \in \mathcal{H}_d$ with 
$s_{i} \neq \eins_d$.
If $d=2$,
$w_{ijk} = \varepsilon_{ijk}$,
and $s_i = \sigma_i$,
then it holds that $|\ex{W_S}| = |\mathcal{T}|$.
To proceed, we recall that a three-particle
state is called biseparable if
\begin{align}
    \vr_{\text{bs}}
= \sum_k p_k^A \vr_k^{A}\otimes \vr_k^{BC}
+ \sum_k p_k^B \vr_k^{B}\otimes \vr_k^{CA}
+ \sum_k p_k^C \vr_k^{C}\otimes \vr_k^{AB},
\end{align}
where and $p_k^X$ for $X=A,B,C$
are probability distributions.
The state is called genuinely multiparticle entangled if it cannot be written in the form of $\vr_{\text{bs}}$.
Now we will make the following:

\vspace{1em}
\noindent
\textbf{Lemma.}
\textit{For a three-qudit state $\vr_{ABC}$,
we denote 
the vector $\vec{s}_X = (s_i^X)$
and 
the matrix $S_{XY} = (s_{ij}^{XY})$
with
$s_{i}^{X} =\tr[\vr^{X} s_{i}]$ and
$s_{ij}^{XY} =\tr[\vr^{XY} s_{i} \otimes s_{j}]$,
where
$\vr_{X}, \vr_{XY}$ are marginal reduced states
of $\vr_{ABC}$ for $X, Y, Z = A, B, C$.
Any three-qudit fully separable state obeys
\begin{align}
    |\ex{W_S}| \leq
    \max_{X,Y,Z=A,B,C}
    \norm{\vec{s}_X}
    \norm{\vec{v}_{YZ}},
    \label{eq:fullysepAnti}
\end{align}
where
$\norm{\vec{v}}^2 = \sum_i v_i^2$
is the Euclidean vector norm of a vector $\vec{v}$
with elements $v_i$
and the vector 
$\vec{v}_{YZ}
=(v_{i}^{YZ})$
with 
$v_{i}^{YZ}
= \sum_{j,k}
s_j^Y s_k^Z w_{ijk}$.
Also, any three-qudit biseparable state obeys
\begin{align}
    |\ex{W_S}| \leq
    \max_{X,Y,Z=A,B,C}
    \sum_{i}
    \sigma_i(S_{XY})
    \sigma_i({Z}^\ast),
    \label{eq:bisepAnti}
\end{align}
where
$\sigma_i(O)$
are singular values of a matrix $O$
in decreasing order
and the matrix
${Z}^\ast
=(z_{ij}^\ast)$
with 
${z}_{ij}^\ast
= \sum_{k} s_k^Z w_{ijk}$.}

\begin{proof}
Since $|\ex{W_S}|$ is a convex function for a state,
it is sufficient to prove the cases of pure states.
First, let us consider a pure fully separable state
$\vr^{A} \otimes \vr^B \otimes \vr^{C}$. Then we have
\begin{align}
    \ex{W_S}
    =\sum_{i,j,k}
    w_{ijk}
    \tr[\vr^{A} \otimes \vr^B \otimes \vr^{C}
    s_{i} \otimes s_{j} \otimes s_{k}]
    =\sum_i s_{i}^{A} \sum_{j,k} s_{j}^{B} s_{k}^{C} w_{ijk}
    =\sum_i s_{i}^{A} v_{i}^{BC}
    \leq \norm{\vec{s}_A} \norm{\vec{v}_{BC}},
\end{align}
where we used the Cauchy–Schwarz inequality to derive the inequality.
Similarly, we can have cases that correspond to $\vec{s}_B$ and $\vec{s}_C$.

Second, let us consider a pure biseparable state for a fixed bipartition
$XY|Z$.
For a case $AB|C$, we have
\begin{align}
\ex{W_S}
=\sum_{i,j}
s_{ij}^{AB} \sum_k s_k^C w_{ijk} 
=\sum_{i,j}
s_{ij}^{AB} {c}_{ij}^\ast
=\tr [
S_{AB} ({{C}^\ast})^\top ]
\leq
\sum_{i}
\sigma_i(S_{AB})
\sigma_i({C}^\ast),
\end{align}
where we used 
von Neumann’s trace inequality~\cite{mirsky1975trace}.
Similarly, we can obtain the bounds for the other
bipartitions $B|CA$ and $C|AB$.
Hence we can complete the proof.
\end{proof}

\noindent
\textbf{Corollary.}
\textit{Consider the case where
$d=2$,
$w_{ijk} = \varepsilon_{ijk}$,
and $s_i = \sigma_i$.
Any three-qubit fully separable state obeys
$
    |\ex{W_S}| \leq 1.
$
Also, any three-qubit biseparable state obeys
$|\ex{W_S}| \leq 2.$
}
\begin{proof}
    To show these, without loss of generality,
    we can take
    $\vr^{C} = \ket{0}\!\bra{0}$.
    This can lead to that
    $\vec{v}_{BC} = (s_2^B, -s_1^B, 0)$.
    For single-qubit pure states, we have that
    $\norm{\vec{s}_A} = 1$
    and
    $\norm{\vec{v}_{BC}} \leq 1.$
    Thus we can show the the fully separable bound.
    For the biseparable bound, since
    $\sigma_1({C}^\ast) = \sigma_2({C}^\ast) =1$
    and $\sigma_3({C}^\ast) =0$,
    we can immediately
    find that    
    $\sigma_1(S_{AB}) + \sigma_2(S_{AB}) \leq 2$
    for all pure $\vr^{AB}$.
\end{proof}

\section{Detailed discussion of Observation~\ref{ob:twoensembles}}\label{ap:twoensembles}
\begin{observation}
For a $2N$-qubit state $\vr_{AB}$
with the permutationally symmetric reduced states,
any separable $\vr_{AB}$ obeys
\begin{align}
    \mathcal{G}_{AB}^{(2)}
    + \mathcal{J}_A^{(1)}
    + \mathcal{J}_B^{(1)}
    - \mathcal{J}_A^{(1)}
    \mathcal{J}_B^{(1)}
    \leq 1,
\end{align}
where
$g = ({3}/{N^2})^2$
and 
$(\alpha, \beta, \gamma) = (0,12/N^2,0)$.
\end{observation}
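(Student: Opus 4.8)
The plan is to evaluate the two Haar integrals appearing on the left-hand side of Eq.~(\ref{eq:twoensembledetection}) in closed form, re-express the outcome entirely through the one- and two-body data of a single cross-pair of qubits, and then recognize the resulting inequality as the separability criterion of Ref.~\cite{liu2022characterizingcorr}.

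First I would compute $\mathcal{J}_X^{(1)}$ for $X=A,B$. With $(\alpha,\beta,\gamma)=(0,12/N^2,0)$ one has $f_U(\vr)=(12/N^2)\ex{J_{z,X}}_U^2$, and the single-variable Haar integral was already carried out in Appendix~\ref{ap:singletcriterion}: $\int dU\,\ex{J_{z,X}}_U^2 = \tfrac13\sum_{l=x,y,z}\ex{J_{l,X}}^2$. Using the permutational symmetry of $\vr_X$, $\ex{J_{l,X}} = \tfrac{N}{2}\ex{\sigma_l}_{\vr_{X_1}}$, so $\mathcal{J}_X^{(1)} = \sum_l \ex{\sigma_l}_{\vr_{X_1}}^2 = \norm{\vec r_X}^2$, with $\vec r_X$ the Bloch vector of the single-qubit marginal $\vr_{X_1}$.

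Second I would compute $\mathcal{G}_{AB}^{(2)}$. A short algebraic step gives $\eta_{U_{AB}} = 4\big(\ex{J_{z,A} J_{z,B}}_{U_{AB}} - \ex{J_{z,A}}_{U_{AB}}\ex{J_{z,B}}_{U_{AB}}\big)$, since the single-ensemble variances cancel in $\va{J_z^+}_{U_{AB}} - \va{J_z^-}_{U_{AB}}$. Writing the rotated collective operators as $U_X^{\otimes N} J_{z,X} (U_X^\dagger)^{\otimes N} = \sum_l (n_X)_l\, J_{l,X}$ with $\vec n_X$ the Haar-random rotated measurement direction, one obtains $\eta_{U_{AB}} = 4\sum_{k,l}(n_A)_k (n_B)_l\, \tilde C_{kl}$, where $\tilde C_{kl} = \ex{J_{k,A}\otimes J_{l,B}}_\vr - \ex{J_{k,A}}_\vr \ex{J_{l,B}}_\vr$ is the collective cross-covariance matrix. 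Squaring and integrating the two ensembles independently with the two-design identity $\int dU\, (n)_k (n)_{k'} = \tfrac13\delta_{kk'}$ (equivalently the identity behind Eq.~(\ref{eq:haarijappa})) yields $\int dU_{AB}\,[\eta_{U_{AB}}]^2 = \tfrac{16}{9}\tr(\tilde C\tilde C^\top)$. By the permutational symmetry of the two-particle marginals $\tilde C = \tfrac{N^2}{4}\,\tilde c$, where $\tilde c_{kl} = \ex{\sigma_k\otimes\sigma_l}_{\vr_{A_1B_1}} - \ex{\sigma_k}_{\vr_{A_1}}\ex{\sigma_l}_{\vr_{B_1}}$ is the covariance matrix of one cross-pair; with $g = (3/N^2)^2$ the $N$-dependent prefactors cancel exactly, so $\mathcal{G}_{AB}^{(2)} = \tr(\tilde c\tilde c^\top)$.

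Collecting the pieces, the left-hand side of Eq.~(\ref{eq:twoensembledetection}) equals $\tr(\tilde c\tilde c^\top) + \norm{\vec r_A}^2 + \norm{\vec r_B}^2 - \norm{\vec r_A}^2\norm{\vec r_B}^2$, which involves only the two-qubit marginal $\vr_{A_1B_1}$. If $\vr_{AB}$ is separable across the $A|B$ cut then so is $\vr_{A_1B_1}$, and the bound $\le 1$ is precisely Proposition~5 of Ref.~\cite{liu2022characterizingcorr} applied to this state, i.e.\ $\tr(\tilde c\tilde c^\top) \le (1-\norm{\vec r_A}^2)(1-\norm{\vec r_B}^2)$ for any separable two-qubit state. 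I expect the only genuine work to be the bookkeeping of normalizations, which the stated choices of $g$ and $(\alpha,\beta,\gamma)$ are engineered to handle so that all $N$-factors disappear and the known criterion appears in its ``unit'' form; once this is done the last step is immediate. The extension to $m\ge 3$ ensembles should then follow by applying this pairwise estimate to every pair of ensembles.
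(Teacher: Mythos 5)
Your proposal is correct and follows essentially the same route as the paper: evaluate $\mathcal{J}_X^{(1)}$ and $\mathcal{G}_{AB}^{(2)}$ via the spherical two-design identity, use permutational symmetry to reduce everything to the Bloch data of a single cross-pair $\vr_{A_1B_1}$, and then invoke Proposition~5 of Ref.~\cite{liu2022characterizingcorr}. The only (cosmetic) difference is that you quote that criterion in the factored form $\tr(\tilde c\tilde c^\top)\leq(1-\norm{\vec r_A}^2)(1-\norm{\vec r_B}^2)$, whereas the paper expands $C_{pq}^2$ and matches the criterion term by term; the two are algebraically identical.
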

\begin{proof}
We begin by writing
\begin{subequations}
    \begin{align}
    \mathcal{G}_{AB}^{(r)}
    &= g\int dU_A \int dU_B \,
    [\eta_{U_{AB}}]^r,
    &
    \eta_{U_{AB}}
    &= \va{J_{z}^{+}}_{U_{AB}} - \va{J_{z}^{-}}_{U_{AB}},\\
    \mathcal{J}_X^{(r)} (\vr_X)
    &= \int dU_X \, [f_U (\vr_X)]^r,
    &
    f_U (\vr_X)
    &= \alpha \va{J_{z, X}}_{U_X} + \beta \ex{J_{z, X}}_{U_X}^2 + \gamma,
\end{align}
\end{subequations}
where
\begin{subequations}
\begin{align}
    \ex{J_{z}^{\pm}}_{U_{AB}}
    &= \tr\left[
    \vr_{AB} U_{AB}^{\otimes N} J_{z}^{\pm}
    (U_{AB}^\dagger)^{\otimes N}
    \right],
    \qquad
    \va{J_{z}^{\pm}}_{U_{AB}}
    = \ex{(J_{z}^{\pm})^2}_{U_{AB}}
    -\ex{J_{z}^{\pm}}_{U_{AB}}^2,\\
    J_{z}^{\pm}
    &= J_{z, A} \pm J_{z, B},
    \qquad
    J_{z, X} =
    \frac{1}{2}\sum_{i=1}^N \sigma_{z}^{(X_i)},
    \qquad
    U_{AB} = U_A \otimes U_B.
\end{align}    
\end{subequations}
Then we can have
\begin{subequations}
\begin{align}
    \ex{J_{z}^{\pm}}_{U_{AB}}^2
    &= \ex{J_{z, A}}_{U_{A}}^2 + \ex{J_{z, B}}_{U_{B}}^2
    \pm 2\ex{J_{z, A}}_{U_{A}} \ex{J_{z, B}}_{U_{B}},\\
    \va{J_{z}^{\pm}}_{U_{AB}}
    &= \va{J_{z, A}}_{U_{A}} + \va{J_{z, B}}_{U_{B}} \pm 2 \left[
    \ex{J_{z, A} \otimes J_{z, B}}_{U_{AB}}
    - \ex{J_{z, A}}_{U_{A}} \ex{J_{z, B}}_{U_{B}}
    \right],\\
    \eta_{U_{AB}}
    &= 4 \left[
    \ex{J_{z, A} \otimes J_{z, B}}_{U_{AB}}
    - \ex{J_{z, A}}_{U_{A}} \ex{J_{z, B}}_{U_{B}}
    \right].
\end{align}    
\end{subequations}
Let us evaluate the form of $\mathcal{G}_{AB}^{(2)}(\vr_{AB})$.
Applying the assumption that $\vr_A$ and $\vr_B$ are permutationally symmetric,
we can further simplify the form of $\eta_{U_{AB}}$
\begin{align} \nonumber
    \eta_{U_{AB}}
    &=4 \frac{1}{2} \frac{1}{2}
    \left\{
    \sum_{i,j=1}^N
    \tr[\vr_{A_i B_j}
    U_A \sigma_{z}^{(A_i)} U_A^\dagger \otimes
    U_B \sigma_{z}^{(B_j)} U_B^\dagger    ]
    -
    \sum_{i,j=1}^N
    \tr[\vr_{A_i}
    U_A \sigma_{z}^{(A_i)} U_A^\dagger]
    \tr[\vr_{B_i}
    U_B \sigma_{z}^{(B_i)} U_B^\dagger]
    \right\}\\
    &=\frac{N^2}{4} \sum_{p,q=x,y,z}
    C_{pq} \mathcal{O}_{U_{A}}^{(p)}\mathcal{O}_{U_{B}}^{(q)},
\end{align}
where
the covariance matrix $C=(C_{pq})$ is given by
\begin{align}
    C_{pq}
    =\tr[\vr_{A_i B_j} \sigma_p^{(A_i)} \otimes \sigma_q^{(B_j)}]
    -\tr[\vr_{A_i} \sigma_p^{(A_i)}]
    \tr[\vr_{B_j} \sigma_q^{(B_j)}]
    = t_{pq} - a_p b_q,
\end{align}
for the two-qubit reduced state $\vr_{A_i B_j} = \tr_{\overline{ij}}(\vr_{AB})$ such that both particles are still spatially separated, defined in $\mathcal{H}_{A_i} \otimes \mathcal{H}_{B_j}$.
Here we denote that
\begin{align}
    \mathcal{O}_{U_{X}}^{(p)} =
    \tr\left[\sigma_p^{(X_i)}
    U_{X}
    \sigma_{z}^{(X_i)}
    U_{X}^\dagger
    \right],
\end{align}
for $X_i = A_i, B_i$.
Notice that $C_{pq}$ and $\mathcal{O}_{U_{X}}^{(p)}$ are independent of indices $i,j$ due to the permutational symmetry.

To avoid confusion, we have to stress that the above covariance matrix
$C=(C_{pq})$ is different from Eq.~(\ref{eq:covappA}) in Appendix~\ref{ap:neccesuffi}
in general.
If the spatially-separated reduced state $\vr_{A_i B_j}$ is also permutationally symmetric,
both are the same, but here we do not require the assumption.

Using the formula in Eq.~(\ref{eq:haarijappa}) in Appendix~\ref{ap:neccesuffi}, we have that
\begin{align}
    \mathcal{G}_{AB}^{(2)}(\vr_{AB})
    =g \frac{N^4}{4^2}
    \sum_{p,q,r,s=x,y,z}
    C_{pq} C_{rs}
    \int dU_A \,
    \mathcal{O}_{U_{A}}^{(p)}
    \mathcal{O}_{U_{A}}^{(r)}
    \int dU_B \,
    \mathcal{O}_{U_{B}}^{(q)}
    \mathcal{O}_{U_{B}}^{(s)}
    =\sum_{p,q=x,y,z}
    C_{pq}^2,
\end{align}
where we set that $g = ({3}/{N^2})^2$.
Also, since
$\ex{J_{z, A}}_{U_{A}} = (N/4)\sum_{p=x,y,z}a_p \mathcal{O}_{U_{A}}^{(p)}$ and $\beta= 12/N^2$,
we can find 
\begin{align}
    \mathcal{J}_A^{(1)}(\vr_A)
    =\beta \frac{N^2}{4^2}\sum_{p,q=x,y,z}
    a_p a_q
    \int dU_A \,
    \mathcal{O}_{U_{A}}^{(p)}
    \mathcal{O}_{U_{A}}^{(q)}
    =\sum_{p=x,y,z}a_p^2,
\end{align}
as well as $\mathcal{J}_B^{(1)}(\vr_B) = \sum_{p=x,y,z}b_p^2$.
In summary, for the choice $g = ({3}/{N^2})^2$
and 
$(\alpha, \beta, \gamma) = (0,12/N^2,0)$, we have that
\begin{align}
    \mathcal{G}_{AB}^{(2)}
    + \mathcal{J}_A^{(1)}
    + \mathcal{J}_B^{(1)}
    - \mathcal{J}_A^{(1)}
    \mathcal{J}_B^{(1)}
    =\sum_{p,q=x,y,z}C_{pq}^2
    + \sum_{p=x,y,z} (a_p^2+b_p^2)
    -\sum_{p,q=x,y,z} a_p^2 b_q^2.
    \label{eq:evaluatedform1}
\end{align}

To derive the entanglement criterion, we rewrite the right-hand-side in Eq.~(\ref{eq:evaluatedform1}) as
\begin{align}
    \mathcal{G}_{AB}^{(2)}
    + \mathcal{J}_A^{(1)}
    + \mathcal{J}_B^{(1)}
    - \mathcal{J}_A^{(1)}
    \mathcal{J}_B^{(1)}
    =\sum_{p,q=x,y,z} (t_{pq}^2-2 a_p b_q t_{pq})
    + \sum_{p=x,y,z} (a_p^2+b_p^2),
    \label{eq:evaluatedlhsapp}
\end{align}
where we use that
$C_{pq}^2
= t_{pq}^2
+ a_p^2 b_q^2
-2 a_p b_q t_{pq}$.
To proceed further, we recall the separability criterion
presented in Ref.~\cite{liu2022characterizingcorr} (see, Proposition 5):
if a bipartite quantum state $\vr_{XY}$ is separable, then it obeys that
\begin{align}
    \tr(\vr_{XY}^2)+ \tr(\vr_X^2) + \tr(\vr_Y^2) -2 \tr[\vr_{XY}(\vr_X \otimes \vr_Y)]
    \leq 1.
\end{align}
If $\vr_{XY}$ is a two-qubit state, we can rewrite this inequality as
\begin{align}
\sum_{i,j=x,y,z} (z_{ij}^2-2 x_i y_j z_{ij}) +
    \sum_{i=x,y,z} ( x_i^2+ y_i^2)
    \leq 1,
\end{align}
where
$x_i = \tr{(\vr_X \sigma_i)}$,
$y_i = \tr{(\vr_Y \sigma_i)}$,
and
$z_{ij} = \tr{(\vr_{XY} \sigma_i \otimes \sigma_j)}$.
Let us apply this criterion to Eq.~(\ref{eq:evaluatedlhsapp}).
Exchanging the symbols 
\begin{align}
    x_i \longleftrightarrow a_p,
    \ \ \ \ 
    y_i \longleftrightarrow b_p
    \ \ \ \ 
    z_{ij} \longleftrightarrow t_{pq},
\end{align}
we can connect this criterion to Eq.~(\ref{eq:evaluatedlhsapp})
and arrive at the inequality in Observation~\ref{ob:twoensembles}.
Hence we can complete the proof.
\end{proof}

\noindent
\textbf{Remark D1.}
The right-hand-side in Eq.~(\ref{eq:evaluatedform1}),
that is,
the right-hand-side in Eq.~(\ref{eq:twoensembledetection})
in Observation~\ref{ob:twoensembles} in the main text,
can be rewritten as
\begin{align}
    \mathcal{G}_{AB}^{(2)}
    + \mathcal{J}_A^{(1)}
    + \mathcal{J}_B^{(1)}
    - \mathcal{J}_A^{(1)}
    \mathcal{J}_B^{(1)}
    &=\frac{1}{N^4}
    \left\{
    \sum_{p, q} 
    \eta_{pq}^2
    + 4 N^2
    \sum_{p}
    \left[
    \ex{J_{p, A}}^2 +  \ex{J_{p, B}}^2
    \right]
    - 16 \sum_{p, q}
    \ex{J_{p, A}}^2 \ex{J_{q, B}}^2
    \right\},
\end{align}
where
\begin{subequations}
    \begin{align}
    \sum_{p=x,y,z} a_p^2
    &= \left(\frac{2}{N}\right)^2
    \sum_{p=x,y,z} \ex{J_{p, A}}^2,
    \\
    \sum_{p=x,y,z} b_p^2
    &= \left(\frac{2}{N}\right)^2
    \sum_{p=x,y,z} \ex{J_{p, B}}^2,
    \\
    \sum_{p, q =x,y,z} C_{pq}^2
    &=\left(\frac{1}{N^2}\right)^2
    \sum_{p, q=x,y,z} 
    \left[\va{J_{p}^{+}}
    - \va{J_{q}^{-}}\right]^2
    \equiv
    \left(\frac{1}{N^2}\right)^2
    \sum_{p, q=x,y,z} 
    \eta_{pq}^2,
\end{align}
\end{subequations}
and
$\eta_{pq} \equiv
\va{J_{p}^{+}}    - \va{J_{q}^{-}}$.

\vspace{1em}
\noindent
\textbf{Remark D2.}
Here we consider the generalization of Observation~\ref{ob:twoensembles} in the main text to $m$ ensembles for $m \geq 3$.
For that, let us define a quantum state $\vr \in \mathcal{H}_1 \otimes \cdots \otimes \mathcal{H}_m$, where $\mathcal{H}_X = \mathcal{H}_2^{\otimes N}$ for $X=1,\ldots,m$.
Now it is essential to notice that the left-hand-side in Observation~\ref{ob:twoensembles} can be available for any two-pair in $m$ ensembles.
Then, let us define the average over all pairs
\begin{align}
    \mathcal{P}(\vr)
    =\frac{2}{m(m-1)}
    \sum_{X<Y}
    \mathcal{G}_{XY}^{(2)}
    + \mathcal{J}_X^{(1)}
    + \mathcal{J}_Y^{(1)}
    - \mathcal{J}_X^{(1)}
    \mathcal{J}_Y^{(1)},
\end{align}
for $X,Y=1,2,\ldots, m$.
Now we can formulate the following.

\vspace{1em}
\noindent
\textbf{Remark D3.}
\textit{For this $mN$-qubit state $\vr$ consisting of the $m$ ensembles of $N$ spin-$\frac{1}{2}$ particles, if each $N$-qubit ensemble is permutationally symmetric,
then any fully separable $\vr$ obeys
\begin{align}
    \mathcal{P}(\vr)
    \leq 1,
\end{align}
where
$g = ({3}/{N^2})^2$
and 
$(\alpha, \beta, \gamma) = (0,12/N^2,0)$.
}

\begin{proof}
    In general, if a multipartite state $\vr$ is fully separable,
    then all the bipartite reduced states
    are clearly separable.
    For such separable reduced states,
    Observation~\ref{ob:twoensembles} in the main text holds.
    Thus we can complete the proof.    
\end{proof}

\vspace{1em}
\noindent
\textbf{Remark D4.}
Let us test our criterion in Observation~\ref{ob:twoensembles} in the main text with the Dicke state as a bipartite state.
The $N_{AB}$-qubit Dicke state with $m_{AB}$ excitations is defined as
\begin{equation}
    \ket{{N_{AB},m_{AB}}}=\binom{N_{AB}}{m_{AB}}^{-\frac{1}{2}}\sum_{m_{AB}}
    \mathcal{P}_{m_{AB}}
    (\ket{1_{1},\ldots,1_{m_{AB}},
    0_{m_{AB}+1},\ldots,0_{N_{AB}}}),
\end{equation}
where
$\{\mathcal{P}_{m_{AB}}\}$ is the set of
all distinct permutations in the qubits.
Applying the Schmidt decomposition to the Dicke state, one can have
\begin{align}
    \ket{{N_{AB},m_{AB}}}=\sum_{m=0}^{N_{AB}}
    \lambda_{m}
    \ket{{N_A, m_A}}
    \otimes
    \ket{{N_B, m_B}},
\end{align}
where $N_A+N_B = N_{AB}$,
$m_A + m_B = m_{AB}$,
and $m = m_A$.
Here, the Schmidt coefficients $\lambda_m$
are given by
\begin{align}
    \lambda_m =
    \binom{N_{AB}}{m_{AB}}^{-\frac{1}{2}}
    \binom{N_A}{m_A}^{\frac{1}{2}}
    \binom{N_B}{m_B}^{\frac{1}{2}}.
\end{align}
The states
$\ket{{N_A,m_A}}$ and $\ket{{N_B, m_B}}$
are permutationally symmetric states,
for details, see~\cite{stockton2003characterizing, toth2007detection}.

Let us consider the case where
$N_A = N_B = N_{AB}/2$,
and
$m_{AB} = N_{AB}/2$.
Then we have
\begin{subequations}
\begin{align}
    &\ex{J_{p, A}}
    = \ex{J_{p, B}}
    = 0,
    \ \ \text{for}
    \ \ p=x,y,z,
    \\
    &\ex{J_z^2} = \va{J_{z}^{+}} =0,
    \\
    & \va{J_{z}^{-}} = -4 \ex{J_{z,A} \otimes J_{z,B}} = \frac{N_{AB}^2}{4(N_{AB}-1)},
    \\
    &\va{J_{x}^{+}}
    =\va{J_{y}^{+}}
    =\frac{N_{AB}}{4}
    \left(\frac{N_{AB}}{2}+1
    \right),
    \\
    &\va{J_{x}^{-}}
    =\va{J_{y}^{-}}
    =\frac{N_{AB}}{8}
    \frac{N_{AB}-2}{N_{AB}-1},
\end{align}
\end{subequations}
where we used the results in Ref.~\cite{vitagliano2023number}.
Then we have the values of
$
    \va{J_{p}^{\pm}}.
$
In this paper, we set $N_{AB} = 2N$.
This results in
\begin{align}
    \mathcal{G}_{AB}^{(2)}
    + \mathcal{J}_A^{(1)}
    + \mathcal{J}_B^{(1)}
    - \mathcal{J}_A^{(1)}
    \mathcal{J}_B^{(1)}
    =
    \frac{1}{N^4}
    \sum_{p, q=x,y,z} 
    \left[\va{J_{p}^{+}}
    - \va{J_{q}^{-}}\right]^2
    =
    \frac{6 N^4-2 N^3+1}{(1-2 N)^2 N^2}.
\end{align}
The right-hand side monotonically decreases as $N$ increases, and it becomes $3/2$ when $N \to \infty$.
Therefore the pure $2N$-qubit Dicke states can be detected in any $N$.

Finally, let us consider the case where the global depolarizing channel with noise $p$ influences the state as follows:
$\vr_D \to \vr_D^\prime = p \vr_D + (1-p)\vr_{\text{mm}}$
for
$\vr_D = \ket{{N_{AB},m_{AB}}} \! \bra{{N_{AB},m_{AB}}}$
and
the maximally mixed state $\vr_{\text{mm}}$.
This noise effects can change 
$\va{J_{l}^{\pm}}$ as follows:
\begin{subequations}
\begin{align}
    \va{J_{x/y}^{+}}_{\vr_D}
    &\to
    \va{J_{x/y}^{+}}_{\vr_D^\prime}
    =
    \frac{N_{AB}}{4}
    + p \left[
    \va{J_{x/y}^{+}}_{\vr_D}
    -\frac{N_{AB}}{4}
    \right]
    =\frac{N(1+Np)}{2},
    \\
    \va{J_{x/y}^{-}}_{\vr_D}
    &\to
    \va{J_{x/y}^{-}}_{\vr_D^\prime}
    =
    \frac{N_{AB}}{4}
    + p \left[
    \va{J_{x/y}^{-}}_{\vr_D}
    -\frac{N_{AB}}{4}
    \right]
    =\frac{N [N (2-p)-1]}{2 (2N-1)},
    \\
    \va{J_{z}^{+}}_{\vr_D}
    &\to
    \va{J_{z}^{+}}_{\vr_D^\prime}
    = \frac{N_{AB}(1-p)}{4}
    = \frac{N(1-p)}{2},
    \\
    \va{J_{z}^{-}}_{\vr_D}
    &\to
    \va{J_{z}^{-}}_{\vr_D^\prime}
    = \frac{N_{AB}}{4}
    + p \left[
    \va{J_{z}^{-}}_{\vr_D}
    -\frac{N_{AB}}{4}
    \right]
    = \frac{N(2N+p-1)}{2(2N-1)}.
\end{align}
\end{subequations}
This leads to
\begin{equation}
    \mathcal{G}_{AB}^{(2)}
    + \mathcal{J}_A^{(1)}
    + \mathcal{J}_B^{(1)}
    - \mathcal{J}_A^{(1)}
    \mathcal{J}_B^{(1)}
    = \frac{\left(6 N^4-2 N^3+1\right) p^2}{(1-2 N)^2 N^2}.
\end{equation}
Then we can find that the separability bound in Observation~\ref{ob:twoensembles} is violated when $p>p^*(N)$ for the critical point
\begin{equation}
    p^*(N)
    =
    \frac{N (2 N-1)}{\sqrt{6 N^4-2 N^3+1}}.
\end{equation}
In Fig.~\ref{Fig5:criticalpoint}, we illustrate the behavior of the critical point depending on $N$.
In the limit $N\to \infty$, this point becomes $p^* \to \sqrt{2/3}$.

\begin{figure}[t]
    \centering
    \includegraphics[width=0.5\columnwidth]{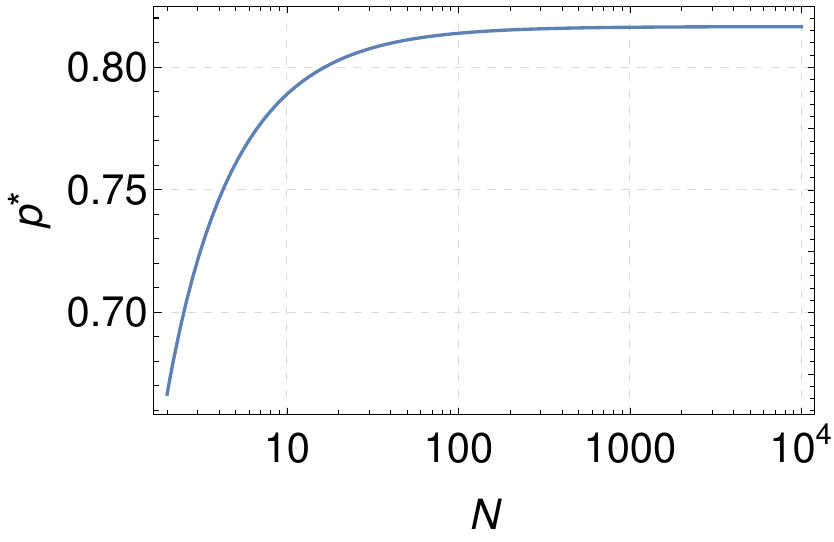}
    \caption{Linear-Log plot of the critical point $p^*(N)$ discussed in Remark D4.
    }
    \label{Fig5:criticalpoint}
\end{figure}

\section{Entanglement detection with finite statistics}\label{ap:finitestatistics}
Here we will discuss the estimation of the statistical error of the moments from collective randomized measurements. We will determine the number of measurements required for reliable entanglement detection, following similar discussions given in the previous works~\cite{ketterer2022statistically,wyderka2023probing,liu2023characterizing,wyderka2023complete,cieslinski2023analysing}.

In randomized measurement schemes, the total number of measurements is denoted as $M_{\text{tot}} = M \times K$.
Here $M$ is the number of random unitaries, and $K$ is the number of measurements for a fixed unitary.
For the moment $\mathcal{J}^{(r)} = \int dU \, [f_U]^r$ defined in Eq.~(\ref{eq:momentJr}) in the main text, the unbiased estimator can be given by $\Tilde{\mathcal{J}}^{(r)} = (1/M) \sum_{i=1}^M [\Tilde{f}_r]_i$,
that is, $\mathbb{E}_U \mathbb{E}[\Tilde{\mathcal{J}}^{(r)}] = \mathcal{J}^{(r)}$.
Here, the subscript $i$ in $[\Tilde{f}_r]_i$ refers to the measurement setting,
$\mathbb{E}_U$ represents the average over collective local random unitaries,
and $\Tilde{f}_r$ is the unbiased estimator of $[f_U]^r$, that is, $\mathbb{E}[\Tilde{f}_r] = [f_U]^r$.

To quantify how much the estimator $\Tilde{\mathcal{J}}^{(r)}$ deviates from the $\mathcal{J}^{(r)}$,
let us consider the inequality
\begin{equation}
    \text{Prob}(
    \Tilde{\mathcal{J}}^{(r)} - \mathcal{J}^{(r)}
    \geq \delta_{\text{error}}) \leq \alpha_{\text{ssl}},
\end{equation}
where
$\delta_{\text{error}}$ is called the error or accuracy,
$\alpha_{\text{ssl}}$ the statistical significance level,
and $\gamma_{\text{cl}} = 1-\alpha_{\text{ssl}}$ the confidence level.
In the case where $[\Tilde{f}_r]_i$ cannot be assumed to be i.i.d.~random variables,
we can employ the so-called (one-sided) Chebyshev-Cantelli inequality
\begin{equation}
    \text{Prob}\left(
    \Tilde{\mathcal{J}}^{(r)} - \mathcal{J}^{(r)}
    \geq \delta_{\text{error}}
    \right) \leq \frac{\text{Var}(\Tilde{\mathcal{J}}^{(r)})}{\text{Var}(\Tilde{\mathcal{J}}^{(r)}) + \delta_{\text{error}}^2},
    \label{eq:Cantelli}
\end{equation}
where
$\text{Var}(\Tilde{\mathcal{J}}^{(r)})$ denotes the variance of the estimator $\Tilde{\mathcal{J}}^{(r)}$.
Based on this inequality, one can determine the total number of measurements $M_{\text{tot}} = M \times K$ required for entanglement detection, for a fixed error and confidence level.
Since it holds that $\text{Var}(\Tilde{\mathcal{J}}^{(r)}) = (1/M^2)\sum_{i=1}^M \text{Var}([\Tilde{f}_r]_i)$,
the main task is to evaluate the variance $\text{Var}([\Tilde{f}_r]_i)$.

In the following, as a simple example, we will particularly focus on Observation~\ref{ob:singletcriterion} in the main text: the moment $\mathcal{J}^{(1)} = \int dU \, [3\va{J_z}_U]$ is equal to $\sum_{l=x,y,z}\va{J_l}$, and any $N$-qubit fully separable state obeys $\mathcal{J}^{(1)}(\vr) \geq N/2$. Now the variance is written as
$\text{Var}(\Tilde{f}_1) = \mathbb{E}_U \mathbb{E} [(\Tilde{f}_1)^2] - [\mathcal{J}^{(1)}]^2$, and the explicit form of $\mathbb{E} [(\Tilde{f}_1)^2]$ can be given by
\begin{equation}
    \mathbb{E} [(\Tilde{f}_1)^2]
    = 9 \left\{ c_1(K) \ex{J_z^4}_{U}
    + c_2(K) \ex{J_z^3}_{U} \ex{J_z}_{U}
    + c_3(K) \ex{J_z^2}_{U}^2
    + c_4(K) \ex{J_z^2}_{U} \ex{J_z}_{U}^2
    + c_5(K) \ex{J_z}_{U}^4
    \right\},
\end{equation}
where
$c_1(K) = 1/K$,
$c_2(K) = -4/K$,
$c_3(K) = [(K-1)^2+2]/[K(K-1)]$,
$c_4(K) = -2(K-2)(K-3)/[K(K-1)]$,
and
$c_5(K) = (K-2)(K-3)/[K(K-1)]$.
This expression can be derived using the result in Ref.~\cite{bonsel2023error} that coincides with \cite{o2014some}.

Let us consider the statistically significant test with the family of the states
\begin{equation}
    \vr_p = (1-p) \vr_{\text{singlet}} + p \frac{\eins}{2^N},
\end{equation}
where $\vr_{\text{singlet}}$ denotes the $N$-qubit many-body spin singlet state, discussed in the main text.
Since this state obeys $\ex{J_l^k}_{\vr_{\text{singlet}}} = 0$ for any $k$,
we have that
$\ex{J_l}_{\vr_p} = 0$,
$\ex{J_l^2}_{\vr_p} = Np/4$,
and therefore,
$\mathcal{J}^{(1)}(\vr_p) = 3Np/4$.
Thus, the state $\vr_p$ violates the separability bound in
Observation~\ref{ob:singletcriterion}
when $p$ becomes smaller than the critical point $p_\text{sep} = 2/3$,
which is independent of $N$.
From a straightforward calculation, we can obtain
\begin{equation}
    \text{Var}(\Tilde{\mathcal{J}}^{(1)})
    = \frac{9 N p \{3 N (p-1) + 2 - K [N (p-3)+2]\}}{[16 (K-1) K] M},
\end{equation}
where we used $\ex{J_l^4}_{\vr_p} = N(3N-2)p/16$.
Rearranging the Chebyshev-Cantelli inequality in Eq.~(\ref{eq:Cantelli})
and requiring that the confidence $1- \text{Prob}(\Tilde{\mathcal{J}}^{(1)} - \mathcal{J}^{(1)} \geq \delta_{\text{error}})$ is at least $\gamma_{cl}$,
we have
\begin{equation}
    \delta_{\text{error}} = \sqrt{\frac{\gamma_{\text{cl}}}{1-\gamma_{\text{cl}}} \text{Var}(\Tilde{\mathcal{J}}^{(1)})}.
\end{equation}
Since the variance $\text{Var}(\Tilde{\mathcal{J}}^{(1)})$ can monotonically increase for large $p$, the worst-case error in the estimation is given by $p=1$.
In this case, we have
\begin{equation}
    M = \frac{9 \gamma_{\text{cl}}  N [K (N-1)+1]}{8 (1-\gamma_{\text{cl}}) (K-1) K
    \delta_{\text{error}}^2}.
\end{equation}
To proceed, let us set the error as
$\delta_{\text{error}} = \min_{\vr_{\text{sep}} \in \text{SEP}} \mathcal{J}^{(1)}(\vr_{\text{sep}}) - \mathcal{J}^{(1)}(\vr_p) = N/2 - 3Np/4$.
By minimizing the value of $M_{\text{tot}}= M(K) \times K$ with respect to $K$
for a fixed $N$, we can thus find the optimal number of measurements.
In Fig.~\ref{Fig6:numbermeasurement}, we illustrate the necessary number of measurements for $N = 100$ and the confidence level $\gamma_{\text{cl}} = 0.95$.

\begin{figure}[t]
    \centering
    \includegraphics[width=0.5\columnwidth]{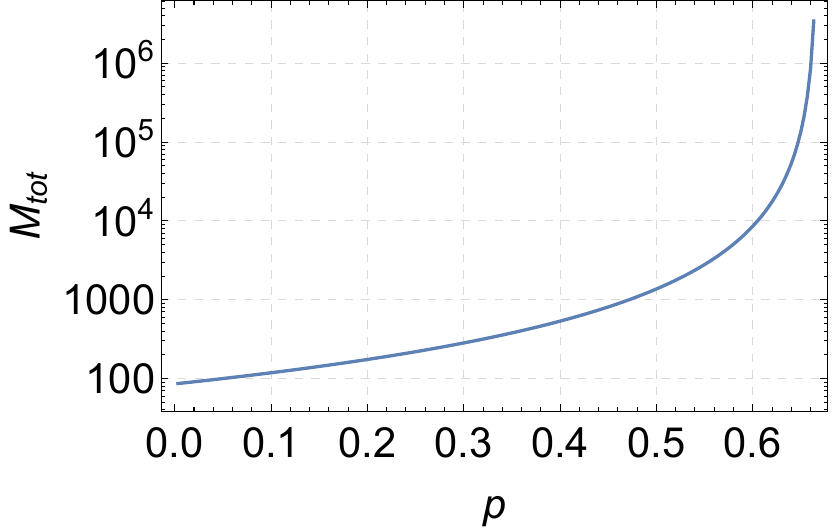}
    \caption{Log plot of the total number of measurements $M_{\text{tot}}$
    obtained from the Chebyshev-Cantelli inequality required to certify the
    violation of the separability bound in Observation~\ref{ob:singletcriterion}
    in the main text of $\vr_p$, for
    $N=10^2$ and confidence level $\gamma_{\text{cl}} = 0.95$.}
    \label{Fig6:numbermeasurement}
\end{figure}



%
\end{document}